   \newtheorem{theorem}{Theorem}[section]
   \newtheorem{lemma}[theorem]{Lemma}
   \newtheorem{proposition}[theorem]{Proposition}
   \newtheorem*{condition}{Condition}
   \newenvironment{definition}[1][Definition]{\begin{trivlist}
   \item[\hskip \labelsep {\bfseries #1}]}{\end{trivlist}}
   \newenvironment{example}[1][Example]{\begin{trivlist}
   \item[\hskip \labelsep {\bfseries #1}]}{\end{trivlist}}
   \newenvironment{remark}[1][Remark]{\begin{trivlist}
   \item[\hskip \labelsep {\bfseries #1}]}{\end{trivlist}}
   \numberwithin{equation}{section}
\renewcommand{\thefootnote}{\arabic{footnote}}
 \newcommand{\tn}{\textnormal}
\newcommand{\zee}{\mathbb{Z}}       %%%%%%%%%%%%%%%%%% Integers
\newcommand{\zd}{\mathbb{Z}^d}       %%%%%%%%%%%%%%%%%% Integer Lattice
\newcommand{\edge}{\mathbb{E}}       %%%%%%%%%%%%%%%%%%% Edge set
\newcommand{\lattice}{\mathbb{L}}    %%%%%%%%%%%%%%%%%%%% Lattice
\newcommand{\lxt}{\mathbb{Z}^d\times\mathbb{R}}   %%%%%%%%%%%%% Integer Lattice by Time
\newcommand{\be}{\begin{equation}}  
\newcommand{\ee}{\end{equation}}   
\newcommand{\beq}{\begin{eqnarray}}  
\newcommand{\eeq}{\end{eqnarray}}
\newcommand{\cA}{\mathcal{A}}  
\newcommand{\cC}{\mathcal{C}}   
\newcommand{\cF}{\mathcal{F}}   
\newcommand{\cH}{\mathcal{H}}   
\newcommand{\cK}{\mathcal{K}}   
\newcommand{\cM}{\mathcal{M}}  
\newcommand{\cO}{\mathcal{O}}   
\newcommand{\cS}{\mathcal{S}}   
\newcommand{\cU}{\mathcal{U}}
\newcommand{\bbC}{\mathbb{C}}   
\newcommand{\bbE}{\mathbb{E}}   
\newcommand{\bbN}{\mathbb{N}}   
\newcommand{\bbP}{\mathbb{P}}  
\newcommand{\bbQ}{\mathbb{Q}}   
\newcommand{\bbR}{\mathbb{R}}   
\newcommand{\bbT}{\mathbb{T}}   
\newcommand{\bbZ}{\mathbb{Z}}
\newcommand{\bfb}{{\bf b}}   
\newcommand{\bff}{{\bf f}}   
\newcommand{\bfp}{{\bf p}}    
\newcommand{\bfw}{{\bf w}}    
\newcommand{\bfx}{{\bf x}}    
\newcommand{\bfy}{{\bf y}}
\newcommand{\bfB}{{\bf B}}   
\newcommand{\bfC}{{\bf C}}    
\newcommand{\bfD}{{\bf D}}     
\newcommand{\bfQ}{{\bf Q}}           
\newcommand{\bfT}{{\bf T}} 
\newcommand{\bfOm}{{\bf \Omega}}  
\newcommand{\bfTh}{{\bf \Theta}}
\newcommand{\tql}{\textquoteleft}   %%%%%%%%%%%%%%%%%%%% 
\newcommand{\tqr}{\textquoteright}   %%%%%%%%%%%%%%%%%%%% 
\newcommand{\real}{\mathbb{R}} %%%%%%%%%%%%%%%%%%%% Reals
\newcommand{\cons}{{\bf \Omega}}   %%%%%%%%%%%%% Space of Configurations
\newcommand{\csm}[2]{{\bf Q}_{#1;\lambda}^{#2}} %%%%%%%%%%% Measure on Configurations
\newcommand{\con}{\omega } %%%%%%%%%%%%%%%%% Generic Configuraton
\newcommand{\lra}{\leftrightarrow}   %%%%%%%%%%%%%% Communicates
\newcommand{\llra}{\longleftrightarrow}   %%%%%%%%%%%%%% Communicates
\newcommand{\less}{\backslash}                                % Set A "less" Set B
\newcommand{\ag}{{\bf T}}                                     % Abelian Group
\newcommand{\spc}{{\bf \Theta}}                               % (compact metric) Space
\newcommand{\tr}{\mathrm{tr}}
\newcommand{\wt}{\widetilde}
\newcommand{\wh}{\widehat}
\newcommand{\ol}{\overline}
\begin{document}

\title{Localization for the Ising model in a transverse field with generic aperiodic disorder}
\author{Rajinder Mavi}
\maketitle
{\let\thefootnote\relax\footnote{This work was supported by the Institute of Mathematical Physics at Michigan State University and NSF Grant DMS-1101578} }
%\begin{flushleft}
\begin{abstract} 
  We show that the transverse field Ising model undergoes a zero temperature phase transition for a $G_\delta$ set of ergodic transverse fields. We apply our results to the special case of quasiperiodic  transverse fields, in one dimension we find a sharp condition for the existence of a phase transition.
\end{abstract}

\section{Introduction}

   The Hamiltonian of the transverse field  quantum Ising model on $\Lambda \subset \subset \bbZ^d$  is defined as,
    \begin{equation}   \label{imtf}
         H_\Lambda 
         = - \sum_{{x,y\in \Lambda}:{\|x-y\| = 1}} 
          \frac{ \lambda }{2}\sigma_x^{(3)}\sigma_y^{(3)}     -  \sum_{x\in \Lambda}  \delta(x) \sigma_x^{(1)}
     \end{equation} 
   where, the $\sigma^{(i)}$ are the Pauli spin matrices, ie for $x\in \Lambda$,
 \[     \sigma^{(3)}_x = \left( \begin{matrix} 1 & 0 \\ 0 & -1 \end{matrix}  \right),  \hspace{1in}  
        \sigma^{(1)}_x = \left( \begin{matrix}  0 & 1 \\ 1 & 0 \end{matrix}  \right ).   \]
     $H_\Lambda$ acts in the Hilbert space $\cH_\Lambda = \otimes_{x\in \Lambda} \bbC^2 $. 
    We will consider the behavior of correlations $\langle \sigma^{(3)}_x \sigma^{(3)}_y \rangle_{\delta,\lambda}$ in the ground state.

    For $\delta(x)$ i.i.d. random variables on $(0,1)$ and low  interaction  $\lambda \searrow 0$ between spins, the ground state is known to exhibit both long range order and short range ordered phases depending  on the behavior of the distribution function $g (s) = \bbP(\delta < s) $ as $s \to 0$. The long and short range ordered phases are formally defined by the behavior of the  order parameter (\ref{order}) which is defined below. 
    
     To explain the appearance of either phase in the ground state in the  small $\lambda$ regime it is helpful to first recall the behavior for constant $\delta$. For any dimension, there is a critical ratio $\rho_c > 0 $, so that for constant $\delta$ obeying $0 < \delta/\lambda < \rho_c$, the ground state attains long range order with spontaneous magnetization. On the other hand, for constant $\delta$ satisfying $ \delta/\lambda > \rho_c$ the ground state has short range order characterized by exponential decay of spin correlations. The sharpness of the phase transition  was proven concurrently in \cite{BG09}, \cite{CI10}.
     
    For the non-constant case, for some choices of parameters, an ordering principle can determine the phase of the ground state. For  fields $\delta$ so that $ 0 <\delta(x) \leq 1 $ for all $x$, fixing $\lambda > \rho_c^{-1} \geq \rho_c^{-1}\sup \delta(x)$  implies that the ground state is in the long range order phase. If there is some $s > 0$ so that $\delta(x) > s $ for all $x$ (which holds almost surely if $g(s) = 0$), then fixing $\lambda < s/\rho_c$ implies exponential decay of spin correlations in the ground state. 

       The ordering principle does not apply for small $\lambda$ if $g (s) > 0$ for all $s > 0$. In that case, one must resort to alternative arguments taking into account the geometry of the  regions $\Lambda \subset\bbZ^d$ where $\min_{x \in \Lambda} \delta(x)< \rho_c \lambda$, to determine the phase of the ground state.
       
     If the distribution function $g$ defining the random field $\delta$ is sufficiently nice, there is a disordered phase for small enough $\lambda$. It is known that there exists an $\alpha_d$,  so that, if $\alpha > \alpha_d$ and $\limsup_{s\to \infty} s^\alpha g(e^{-s}) < \infty$ then there is $\lambda_\alpha$ so that $ 0 < \lambda < \lambda_\alpha$ implies the system is in the short range order phase. The system is in the long range ordered phase for $\lambda > \rho_c^{-1} $ if, for all $x$, $0<\delta(x) \leq 1$. Thus, the phase transition of long range order for large $\lambda$ to short range order for small $\lambda$ is preserved. Inasmuch as the random field preserves a phase transition at positive $\lambda$ and therefore resembles the constant $\delta$ system, this regime is known as the  {\it weak} disordered field. Notice, if the density $\tfrac{dg}{ds} $ is bounded, the field is weakly disordered.

     On the other hand, if the random field is such that 
 $\liminf_{s\to \infty}s^{d}g(e^{-s}) = \infty$, the locally  correlated regions  will percolate for any $\lambda > 0$. Therefore, in this regime the ground state is  long range ordered for all $\lambda > 0$. As this disorder regime eliminates a phase transition at positive $\lambda$ it is known as {\it strong} disorder. See Section \ref{randomcon} for details of the random disordered field case.

     Although it is known that the weakly disordered field obtains  a short and a long range ordered phase, at large and small $\lambda $ respectively, the details of the phase transition are undetermined. In particular, though it is known in the clean system $\delta(x) \equiv \delta$, that the phase transition  is sharp \cite{BG09}, \cite{CI10}   no corresponding result is known for disordered systems.

      As an alternative to characterizing the (almost sure) phase diagram of spin   models with    disordered defects
     one may consider the phase diagram for (a generic class of) ordered defects.

       In this paper we will consider dynamically defined transverse fields $\delta$, which include, for example, quasiperiodically ordered fields.  We show  that the ground state transition from a long range ordered phase for  $\lambda > \rho_c^{-1}$ to  a short range order phase as   $\lambda \searrow 0$   persists for topologically generic, ie dense $G_\delta$,   ordered defects of the transverse field. Note that this corresponds to the random case with weak disorder.
      As with the disordered case we find there are atypical, yet dense,  ordered $\delta$ for which the ground state is in a long range ordered phase for any $\lambda > 0$.
      We discuss the topology of the  dynamically defined fields below,
     see Section \ref{main} for the metric defining the topology of sampling functions given a dynamical system.

    Concretely, we define long range order as the presence of sponteous magnetization and short range order as the absence of spontaneous magnetization.

      Absence of spontaneous magnetization is  a relatively  weak indicator of localization. 
        One may ask for stronger indicators of localization such as exponential clustering of correlations or exponential decay of entanglement.
        Indeed we appeal to a multiscale argument to obtain exponential decay of correlations in a dense set of dynamically ordered environments.
        The multiscale argument is similar to the approach developed for the disordered model  \cite{cam91},\cite{ckp91} and quasiperiodic models in \cite{jito}.  
      We note that   exponential decay of entanglement
      has  been demonstrated for
      the ground state of (\ref{imtf})   for small $\lambda$ in the weak disorder regime \cite{gos08},
      the results of that paper rely on the multiscale method developed in \cite{cam91},\cite{ckp91} 
      and should carry into environments with  dynamically ordered defects we consider in this paper.

     In terms of the model parameters, the transverse field Ising model (\ref{imtf}) is a boundary case of the anisotropic XY model in a transverse field. In the one dimensional setting, it is well known that under the Jordan-Wigner transformation, the one-dimensional $XY$ model may be transformed to a system of non-interacting fermions.
   This transformation carries over to the transverse field Ising model as well, in this case,     at the ground state, the single particle Hamiltonians reduce simply to Jacobian matrices defined by nearest neighbor hoppings $H_{2i,2i+1} = \delta(i)$ and $H_{2i+1, 2i+2} = 1$. 
     Chapman and Stolz \cite{CS2015} utilize this construction to investigate the spectrum, including the ground state of (\ref{imtf}) in the case of random $\delta$ so that $|E \log \delta| < \infty $. However, this leaves the nature of the ground state in the case $E|\log \delta| = \infty$ undetermined, which is precisely the case of interest in the one dimensional setting \cite{AizKleinNewman}, (cf. also, the discussion in Section \ref{randomcon}).  On the other hand, the behavior of the single particle systems for quasiperiodic potentials is well understood only for analytic sampling functions, so it does not extend to the current context.    We emphasize that  the above results pertain to the one dimensional model whereas 
       the results in this paper address general d - dimensional models.

   \subsection{The Ising model} 
    The relevant family of two point functions of $H_\Lambda$  in finite volumes $\Lambda \subset \subset \bbZ^d $ and temperatures $ 0 < \beta < \infty$ are defined as 
     \be \label{twopoint}   
             \langle \sigma_x^{(3)} \sigma_y^{(3)} \rangle^{(\Lambda,\beta)}_{\delta,\lambda}  : = 
             \frac{  \tr( \sigma_x^{(3)} \sigma_y^{(3)} e^{-\beta H_\Lambda}   )  } { \tr ( e^{-\beta H_\Lambda}) } \ee
    where we take free boundary conditions on finite sets $\Lambda\subset\subset \bbZ^d$. As we take $\beta \to \infty $ and $\Lambda \to \bbZ^d$, the limiting quantity is the expectation of $\sigma^{(3)}_x\sigma^{(3)}_y$ with respect to the infinite volume ground state. In Sections (\ref{clasd}) and (\ref{posy}) we will show the existence of such limits for general polynomials in $(\sigma_{x_i}^{(\alpha_i)})$, which confirms the existence of the ground state. We will denote this limit by dropping the $\Lambda$ and $\beta$ from the notation.    
        
    The behavior of (\ref{twopoint}) will determine the phase of the ground state  of (\ref{imtf}). We define the order parameter at $x$ as  
         \be \label{order}
        M_{\delta,\lambda}(x):=  \lim_{L \to \infty} \sup \{  \langle \sigma_x^{(3)} \sigma_y^{(3)} \rangle_{\delta,\lambda} : \|y-x\| > L \}.  
         \ee 
     We say the ground state is in the long range order phase if, for each $x\in \bbZ^d$,  $M_x > 0$. 
        On the other hand, the ground state is in the short range order phase if $M_x = 0$. 
  In fact, we will see from the FK representation below that it is sufficient to determine the value of $M_{\delta,\lambda}$ at 0 to determine if the ground state is long or short range ordered.

   \subsubsection{Description of the model and main results} \label{main}

    We will define the transverse field $\delta$ by a sampling function over a dynamic system on a 
     compact metric space $(\bfTh, r)$.
  Let $\bfT$ be a group action $\bfT:(\bfTh, \bbZ^d ) \to \bfTh$ 
  defined by a set of continuous commuting automorphisms,  
       $\{T_i\}_{i=1}^d$. 
    We will write, for $x \in \bbZ^d$ and $\theta \in \bfTh$,
   $\bfT: (\theta,x) \mapsto \bfT^x\theta := T_1^{x_1} \cdots  T_d^{x_d} \theta$.
We require the group generated by  $\left\{T_i\right\}_{i=1}^d$ to be {\em aperiodic}, 
   that is, for all $x\in \bbZ^d\setminus\{0\}$, the map $\bfT^x$ has no fixed points.
   Moreover, we require the set of automorphisms to share 
   an ergodic probability measure $\mu$ on $(\bfTh,r)$. Let us sum up this construction in the following definition.
   \begin{definition} We say $(\bfTh,\bfT  )$ is an environment process if $\bfT$ is a $\bbZ^d$ group action on the compact measure space $(\bfTh,\mu)$ so that each  $T_i$ is  ergodic with respect to measure $\mu$.
   \end{definition}

We will consider  sampling functions which are non-negative, continuous and have a non-empty finite zero set. Let us denote this set by
\be\label{samfun}
  \cC_{fin}^+(\bfTh)   =  
   \left\{h\in \cC(\bfTh): 
   \forall \theta\in \bfTh,\  h(\theta) \geq 0;
    1 \leq |h^{-1}(0)| < \infty \right\}
\ee 
 to which we associate the usual   $L_\infty$ metric for continous functions: $d_{\infty}(h,h') = \|h-h'\|_\infty$.     For initial condition $\theta$, and $h \in \cC^+_{fin}(\bfTh)$, define the transverse field by $\delta(x) = h(\bfT^x \theta)$. 
 
 \begin{example}
 
 The construction above generalizes quasiperiodic sampling functions which was studied in\cite{jito}. Here, for some $n > 0$ let $ \bfTh = \bbT^n = \bbR^n /\bbZ^n $  and for any $d\geq 1 $. Let  $A\in M_{n\times d}$ be a matrix so that each column vector $A_i$ of $A$ generates an ergodic shift $(x_i,\theta) \to T_i^{x_i}\theta = \theta + x_iA$ and the set  $\{A_1,..,A_n, e_1,...,e_n\}$ is rationally independent (the vectors $e_i $ are the standard basis in $\bbR^n$). The group action is defined as  $\bfT^x\theta = Ax+\theta$ for $\theta \in \bbT^n$. Thus the general $n$ frequency $d$ dimensional quasiperiodic field is defined, for some sampling function $h $ and initial condition $\theta \in \bfTh$, by $\delta(x) =h( \bfT^x\theta)$. For $n = d = 1$ we have $ \bfTh = \bbT$ and the one frequency one dimensional quasiperiodic trasverse field is defined by $\delta(x) = h(\theta + \omega x)$ for an irrational value $\omega \in \bbR \setminus \bbQ.$ An explicit example may be constructed as follows. Let $f$ be an analytic function in a neighborhood of the unit circle in $\bbC$ such that $f$ is real and non-negative on the unit circle. We may then take a sampling function to be $h(\theta) = f(e^{i 2\pi  \theta})$.

 As a second example we introduce the skew shift in one dimension $d =1$. Here, let $\bfTh = \bbT^2$, define $ \bfT(\theta_1, \theta_2) = (\theta_1 + \omega, \theta_1 + \theta_2)$ for irrational $\omega$.    More generally, our results apply to any   minimal ergodic system on a compact set. Finally, note the construction of  aperiodic fields rules out i.i.d. random fields.
  
  \end{example}

   The two point function is invariant
   under scaling,
   $\langle \sigma^{(3)}_x 
   \sigma^{(3)}_y\rangle_{\delta,\lambda} 
   =\langle \sigma^{(3)}_x 
   \sigma^{(3)}_y\rangle_{k \delta, k \lambda} $,
   for any $k > 0$.  
   As we are interested in the $\lambda \to 0$
   limit, we control for this degree of freedom by 
   identifying a sampling 
   function with its normalization.
   Let us define $\tau(h) := h/\|h\|_\infty$
   which maps $\cC_{fin}^+(\bfTh)$ onto
   \begin{equation} \label{samfun1}
   \cC_{fin,1}^+(\bfTh) := 
     \{ h \in \cC_{fin}^+(\bfTh) : \|h\|_\infty = 1 \}, 
   \end{equation}
   which is again equipped with the $d_\infty $ metric.
   For $h,h'\in \cC_{fin}^+(\bfTh)  $, 
   let us define distance with respect
   to this normalization as
      $d_\tau(h,h') := d_\infty(\tau(h),\tau(h'))$.

    Given a initial condition $\theta$ and sampling function h,
     we define the  magnetization parameter  similar to (\ref{order}).
     Let $\delta(x) = h(\bfT^x\theta)$ and define
     $ M_{h ,\theta;\lambda}  := M_{\delta,\lambda} (x)$. 
     It follows from ergodicity and the FK representation that, for any choice of $h,\lambda$, the set $\cU_0 = \cU_0(h,\lambda) = \{\theta \in \bfTh:M_{h,\theta;\lambda}  = 0  \} $ has either full or zero measure.

    We say a sampling function $h$ admits a short range phase (corresponding to weak disorder) if there is some critical $\lambda_h$     so that for $0 < \lambda  < \lambda_h$, $\mu(\cU_0) = 1$. 
    Let us denote the set of functions admitting a short range phase by $F_{weak}$.
    We say a sampling function is short range free  if, for all $\lambda > 0$, we have $\mu (\cU_0) = 0$. Let us denote the set of short range free functions by $F_{strong}$, note the ground state defined by such sampling functions are always in the long range ordered phase.

   \begin{theorem}\label{introtheorem}
       Let $(\bfTh, \bfT)$ be an environment process.
       Then the set of functions $F_{short}$  admitting a short range phase  is a dense $G_\delta$ 
       in the topology of $( \cC_{fin}^+(\bfTh),d_\tau)$.
         Moreover,
         $\cC_{fin}^+(\bfTh)$ partitions into
          $F_{weak}\sqcup F_{strong}$ and
          $F_{long}$ is dense in $(\cC_{fin}^+(\bfTh), d_\tau)$.
   \end{theorem}
 
 \begin{remark}
  In fact, the placement of a function into $F_{weak}$ or $F_{strong}$ depends only on the behavior of the function near the zero set. Thus, once some condition for short range free sampling function is determined, density of the set $F_{strong}$ is almost immediate as only a small perturbation is required near the zero set.  
  \end{remark}

  For actions $\bfT$ which are uniquely ergodic we can slightly improve the characterization of the long range phase for sampling functions in $F_{weak}$.
 
  \begin{theorem}\label{introtheorem2}
         Let $(\bfTh,\bfT)$ be an environment process.
       If at least one $T_i$ is uniquely ergodic  
          then for any $\lambda$  we have
          $  M_{h,\lambda} \equiv 0$ or there is some $m > 0$ 
            so that uniformly $ M_{h,\lambda} (\theta) \geq m$.
  \end{theorem}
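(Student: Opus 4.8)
The plan is to pass to the ground-state random-cluster (FK) representation of the two-point function, derive from it a multiplicative inequality that makes $\{M_{h,\lambda}>0\}$ a translation-invariant event, and then use the uniform recurrence of a uniquely ergodic $T_i$ to turn the resulting $\mu$-a.e.\ dichotomy into one that holds for every $\theta$. Concretely, write $\Pi_\theta$ for the infinite-volume free-boundary $q=2$ random-cluster measure attached to the field $\delta(x)=h(\bfT^x\theta)$ on $\bbZ^d\times\bbR$, so that $\langle\sigma_0^{(3)}\sigma_y^{(3)}\rangle_\theta=\Pi_\theta((0,0)\llra(y,0))$ and $\Pi_{\bfT^z\theta}$ is the spatial shift of $\Pi_\theta$. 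From $\{(0,0)\llra(z,0)\}\cap\{(z,0)\llra(y,0)\}\subseteq\{(0,0)\llra(y,0)\}$ and FKG one gets $\langle\sigma_0^{(3)}\sigma_y^{(3)}\rangle_\theta\ge\langle\sigma_0^{(3)}\sigma_z^{(3)}\rangle_\theta\langle\sigma_z^{(3)}\sigma_y^{(3)}\rangle_\theta$, while monotonicity in the field (which is everywhere $\le\|h\|_\infty$) bounds the nearest-neighbour correlation below by its value $c(\lambda)>0$ in the homogeneous model, hence $\langle\sigma_0^{(3)}\sigma_z^{(3)}\rangle_\theta\ge c(\lambda)^{\|z\|_1}$ uniformly in $\theta$. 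Combining these and letting $\|y\|\to\infty$,
\[
  M_{h,\lambda}(\theta)\ \ge\ c(\lambda)^{\|z\|_1}\,M_{h,\lambda}(\bfT^z\theta)\qquad(z\in\bbZ^d),
\]
so (applying it with $\pm z$) $\{M_{h,\lambda}>0\}$ is $\bfT$-invariant and thus $\mu$-null or $\mu$-conull by ergodicity; recall that unique ergodicity of one $T_i$ forces $\mu$ to be the common $\bfT$-ergodic measure.

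The next step is to pin down the a.e.\ value. The averaged measure $\overline\Pi=\int\Pi_\theta\,d\mu(\theta)$ on (environment, configuration) pairs is translation invariant, since $\mu$ is $\bfT$-invariant and $\Pi_{\bfT^z\theta}$ is the shift of $\Pi_\theta$, and inherits ergodicity from $\mu$; a Burton--Keane argument then gives at most one infinite cluster $\overline\Pi$-a.s., whence the percolation probability $P(\theta)=\Pi_\theta((0,0)\llra\infty)$ is $\mu$-a.e.\ a constant $\bar P$, and FKG together with uniqueness yields $\langle\sigma_0^{(3)}\sigma_y^{(3)}\rangle_\theta\ge P(\theta)P(\bfT^y\theta)$. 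Hence for $\mu$-a.e.\ $\theta$: $M_{h,\lambda}(\theta)=0$ if $\bar P=0$ (the cluster of the origin being a.s.\ finite), and $M_{h,\lambda}(\theta)=\bar M$ for a constant $\bar M>0$ if $\bar P>0$. This is the dichotomy, $\mu$-a.e.; what remains is to remove the qualifier "a.e."

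For the case $\bar P=0$: the $\mu$-null $\bfT$-invariant set $\{M_{h,\lambda}>0\}$, if nonempty, contains an entire $T_i$-orbit whose closure contains $\mathrm{supp}\,\mu$, and combining this with the upper semicontinuity of the wired-boundary truncations (decreasing limits of continuous functions of the finitely many relevant fields) forces $M_{h,\lambda}\equiv0$. For the case $\bar M>0$: fix a point $\theta^\ast\in\mathrm{supp}\,\mu$ of the full-measure $\bfT$-invariant set on which $M_{h,\lambda}\equiv\bar M$, produce --- using lower semicontinuity of the free-boundary finite-volume correlations near $\theta^\ast$ --- an open $W$ with $W\cap\mathrm{supp}\,\mu\neq\emptyset$ and $\inf_W M_{h,\lambda}\ge\bar M/2$, and then invoke unique ergodicity of $T_i$: the uniform convergence of the Birkhoff averages of a continuous bump supported in $W$ furnishes an $N_0$, independent of $\theta$, such that every segment $\{T_i^k\theta\}_{k\le N_0}$ meets $W$; the multiplicative inequality above then gives $M_{h,\lambda}(\theta)\ge c(\lambda)^{N_0}(\bar M/2)=:m>0$ for every $\theta$.

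The hard part is the construction of $W$, i.e.\ obtaining enough regularity of $\theta\mapsto M_{h,\lambda}(\theta)$: since this function is a $\limsup$ over $\|y\|\to\infty$ of correlations that are merely increasing (hence lower semicontinuous) limits of continuous functions, one cannot simply assert semicontinuity, and must instead express the relevant lower bound through the percolation probability $P$, or through the wired/free random-cluster comparison, which is a decreasing limit of continuous functions of the field. Establishing uniqueness of the infinite cluster for the continuum $q=2$ random-cluster model in a fixed (non-translation-invariant) environment --- needed both here and to define $\bar M$ --- is the accompanying technical point that must be handled with care.
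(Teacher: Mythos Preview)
Your core strategy is the same as the paper's: derive via FKG a multiplicative inequality of the form $M(\theta)\ge c^{|z|}M(\bfT^z\theta)$ (the paper uses an explicit ``corridor'' event $L_{1,N}$---no cuts on the segment, at least one bond per step---in place of your chain of nearest-neighbour correlations, but the two are interchangeable), and then invoke unique ergodicity of $T_i$ to obtain a uniform bound $N_0$ on the time needed to hit a good set, turning the a.e.\ dichotomy into a pointwise one.

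The main difference is the object to which the argument is applied. The paper carries it out for the percolation probability $\wh M_{q;h,\lambda}(\theta)=\lim_n \bfQ^{(q)}_{h,\theta;\lambda}((0,0)\llra\partial B_n)$ rather than for the limsup magnetization $M_{h,\lambda}$; it establishes (Proposition~\ref{weakc}, Proposition~\ref{mreg}) that each $\wh M^{(n)}$ is continuous in $\theta$ and that $\wh M^{(n)}\searrow \wh M$, so $\wh M$ is upper semicontinuous. This choice makes your ``hard part'' disappear: the good set $\bfTh_\epsilon=\{\wh M>\epsilon\}$ comes with the needed regularity directly, and the uniform hitting statement under unique ergodicity is asserted in one line (Proposition~\ref{regs}). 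The relation to $M_{h,\lambda}$ is then obtained separately (proof of Theorem~\ref{borelfact}), via uniqueness of the infinite cluster and FKG, exactly the $M\ge \wh M\cdot\|\wh M\|_\infty$ sandwich you sketch. So your Burton--Keane detour and the percolation probability $P(\theta)$ do appear in the paper, but as the bridge between $\wh M$ and $M$, not inside the proof of the dichotomy itself.

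In short: your proposal is correct in outline and matches the paper's route, but you make life harder than necessary by working with $M$ directly. The regularity obstacle you flag in your last paragraph---that $M$ is a $\limsup$ with no obvious semicontinuity---is precisely what the paper avoids by running the entire argument on $\wh M$, whose finite-volume approximants are continuous. If you switch to $\wh M$ first and only pass to $M$ at the end via the FKG/uniqueness sandwich, your sketch becomes the paper's proof.
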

  
  \begin{remark}
  Note that Theorem \ref{introtheorem2} is not true in the random case. Indeed, in the random case for $\lambda$ such that  $g( \rho_c \lambda ) < 1$, and for any $L  < \infty$ with probability 1, there are sites $x$ so that $\delta(y )  > \rho_c \lambda $ for all $ y \in \Lambda_L(x) = \{y\in \bbZ^d:\|y-x\| < L\}$. Thus, there is a sequence $x_i$ so that $\limsup_{i} M_{\delta,\lambda}(x_i) = 0$ with probability 1.

 \end{remark}

   \subsubsection{The FK representation, conditions for $F_{strong}$ and $F_{weak}$, and application to the quasiperiodic case}
     The majority of the analysis will take place in the Fortuin Kasteleyn (FK) representation of the ground state of (\ref{imtf}).
    The FK representation in this case is a percolation model which takes place on $\bbZ^d \times \bbR$.
    We write $ \bfQ_{\delta,\lambda}^{(2)}  \left((x,t)\lra   (y,s)\right) $ for the probability that $(x,t) $ and $(y,s)$ are 
     in the same component given the transverse field $\delta $ and coupling $\lambda$.
    A complete definition of the model is given in Section \ref{models}.
    As discussed in the introduction, utilizing a multiscale analysis, we obtain a stronger form of localization than absence of spontaneous magnetization. In fact, we obtain exponential decay in the two point function which corresponds to exponential decay of clusters in the spatial dimensions. The decay in the continuous dimension, however, has a peculiar form which we state in the following condition. In Theorems \ref{rotationtheorem} and \ref{lowRecurrence} we state conditions sufficient to carry out the multiscale analysis and thus obtain the following type of decay.
 
 \begin{condition}[$\bfC_{\nu,m}$-localized]  
   Given $\nu,m > 0$, we say the ground state determined by a sampling function and coupling value  pair $(h,\lambda)$ is $\bfC_{\nu,m}$-localized if there is a full measure set $\cU_{\nu,m}\subset \bfTh $ so that the following holds. For all $\theta\in \cU_{\nu,m} $ and transverse field $\delta(x) = h(\bfT^x \theta)$, for all $x \in \bbZ^d$ there is some $C_x$ so that   for any $y\in\bbZ^d$ and $t,s\in\bbR$ so that $|x-y| + [\ln(1+|t-s|)]^{1/\nu}  >C_{x}  $ we have
\be \label{cnum}    \bfQ_{\delta,\lambda}^{(2)}  
             \left((x,t)\lra   (y,s)\right) < 
        \exp\left\{- m \left(|x-y| 
              + [\ln(1+|t-s|)]^{1/\nu}\right)\right\}.
  \ee 
 \end{condition} 
  We show below that sampling functions $h$ satisfying a transversality condition of the form (\ref{psilimit}) for $\chi_1 < 1/d$. On the other hand if $h$ also satisfies a condition of the form (\ref{psilimit2}) for some $\chi_2$ so that $\chi_1 >\chi_2 > 0$ the form of the decay (\ref{cnum}) is indeed the best possible, as we discuss in Section \ref{ooc}.

  We apply our analysis from the abstract setting to the one frequency quasiperiodic field in one dimension. A quasiperiodic environment process is defined by a rotation     $\bfT\theta = T_\omega\theta = \theta+\omega$ for $\omega \in [0,1]\setminus \bbQ$ acting on $\theta \in \bfTh  = \bbT$. To discuss our results we will introduce numerical properties of irrational numbers. We can write the irrational rotation $\omega$ in the continued fraction exapansion
  \be \label{cfracrep}  \omega =
     \frac{1}{a_1 +\frac{1}{a_2 +\cdots      }}
  \ee
   where $a_i$ are positive integers. We encode this expansion  as  $\omega = [ a_1,a_2,\cdots]$. Truncating to the $n^{th}$ term, we get the rational approximant $p_n/q_n = [a_1,\cdots,a_n]$. An irrational $\omega$ is said to be of finite type if $a_i$ are uniformly bounded, in particular, the Fibonacci number $\omega = (\sqrt 5 - 1)/2 $ is of finite type with $a_i \equiv 1$. On the other hand $\omega$ is $\gamma$-Diophantine for $\gamma>0$ if there is $C_\omega<\infty$ so that for all $n$ we have $q_{n+1} < C_\omega q_n^{1+\gamma}$. For any $\gamma > 0$, almost all real numbers are $\gamma$-Diophantine. By contrast, the set of finite type  $\omega$  compose a measure 0 set and are $0$-Diophantine. For finite type frequencies, we can specify a sharp condition on the transversality behavior at the zero set separating $F_{weak}$ from $F_{strong}$.

    \begin{theorem}\label{rotationtheorem}
      Let $\spc = \mathbb{T} = \bbR / \bbZ$ be the one dimensional torus.
       Let the transverse field $\delta$ be defined by the quasiperiodic sampling process $\bfT$, which is defined by  $\bfT^x\theta = \theta + x \omega $ for given $\omega \in \bbR\backslash\bbQ$.
   \begin{itemize}
       \item[(1.)]
       If $h \in \cC^{+}_{fin}(\bbT)$ is such that
       there is some point $\theta_0\in\mathbb{T}$ so that there exists $a > 1 $ so that
         \[   \liminf_{\epsilon \to 0} \inf_{\theta: r(\theta, \theta_0) < \epsilon }  \frac{ \log|\log h(\theta) | }{  |\log \epsilon  |}  >  a, \]
        then $h\in F_{strong}$ for  any irrational frequency $ \omega$.
        \item[(2.)]
        On the other hand, if $\omega\in \bbR\setminus \bbQ 
        $ is $\gamma$-Diophantine and there is an $a< \frac{1}{ 1 + \gamma}$ so that the sampling function $h$ obeys
         \[   \limsup_{\epsilon \to 0} \sup_{\theta: r(\theta, h^{-1}(0)) > \epsilon } \frac{ \log|\log h(\theta) | }{  |\log \epsilon  |}  <  a \]
        then $h\in F_{weak}$. Moreover,  for any $ \frac{\alpha (1 + \gamma) +1}{2}  <  \nu < 1$ and $ m> 0$, there is $\lambda_{m,\nu}$ so that for $\lambda < \lambda_{m,\nu}$, the pair  $h,\lambda$ satisfies $\bfC_{\nu, m}$-localization.
   \end{itemize}
    \end{theorem}
   Taking $(2.)$ with $(1.)$ demonstrates a critical disorder at $a = 1$ for rotations of finite type. The proof  of part $(2.)$ is also contained in \cite{jito}.

   These results hold as a corollary of the analysis in Section \ref{analysis} and the following general theorems establishing conditions on localization and long range order.

   The multiscale method relies on a transversality condition on $h$ at $h=0$. To evaluate the behavior at the zero set in the abstract setting, we define the following functions to compare the behavior of the sampling function $h$ with the recurrence defined by an environment process $(\bfTh,\bfT)$. Given $h$, let
   \begin{equation}\label{gdef}
      \phi_h(\epsilon) = \inf\{  h(\theta) : \forall \theta \in \bfTh  \tn{ so that }   r(\theta, h^{-1}(0))  >  \epsilon  \}
   \end{equation}  
   To quantify the recurrence  rate we introduce
   \[ K_1(\epsilon) = \inf_{ \theta\in \bfTh } \max\{K :r(\theta,\bfT^x\theta)> \epsilon \tn{ holds } \forall x \tn{ so that } 0< |x| \leq K   \} \]
 By aperiodicity $r(\theta,\bfT^x \theta) > 0$ for $x\neq 0$, so, by ergodicity and compactness of $\bfTh$, $K_1 \to \infty $ as $ \epsilon \to 0$. 
    The following theorem states a sufficient condition for a sampling function $h$ to admit a localized phase, 
   \begin{theorem} \label{lowRecurrence}
    Let $(\bfTh, \bfT)$ be an environment process.
     Suppose, for some $\chi  < 1/d$, a sampling function $h$  satisfies 
     \be\label{psilimit}    \limsup_{ \epsilon \to 0 }   \frac{\log |\log \phi_h(\epsilon)| }{\log K_1(\epsilon) }    <  \chi .     \ee
      Then for any $\nu$ satisfying $   \frac{1 + \chi }{1 + 1/d } < \nu <1 $ and  any $\infty > m > 0$ there is $\lambda_m$ so that for $ 0 < \lambda < \lambda_m$,     the pair $(h,\lambda)$ are $\bfC_{\nu,m}$-localized. 
   \end{theorem}

   From this condition, it is possible to show the set of  $C_{\nu,m}$-localized $(h,\lambda)$ pairs is nonempty.
      \begin{theorem}\label{fasterthanpolynomial}  
         Let $(\bfTh, \bfT)$ be an environment process. Let $A \subset \bfTh$ be a finite set. Let $m$ and $\nu$ be  parameters such that $ 0 < m  < \infty $ and $\frac{1}{1 + 1/d} < \nu < 1$. Then there exists a function $g_A\in \cC^{+}_{fin,1}(\bfTh)$ so that $g^{-1}_A(0) = A$ and for sufficiently small $\lambda$, the pair $(g_A,\lambda)$ is $\bfC_{\nu,m}$-localized. 
      \end{theorem}

   A complementary result establishes a condition for long range order.  Define, a second parameter for transversality at $h^{-1}(0)$:
     \begin{equation}\label{gdef2}
      \psi_h(\epsilon) =\min_{\theta_0 \in h^{-1}(0)} \sup\{ h(\theta) : r(\theta, \theta_0) < \epsilon  \}.
   \end{equation}  
   Define an alternative recurrence quantity for the minimal subset of the orbit required to cover $\bfTh $ with $\epsilon$ neighborhoods:
   \[  K_2(\epsilon)  = \sup_{\theta_0,\theta_1\in \bfTh}
                        \min\{ K : \exists x \textnormal{ so that } 0 \leq |x| \leq K \textnormal{ and } r(\theta_1,\bfT^x \theta_0) < \epsilon \}.   \]
  As $\bfT$ is ergodic, $K_2$ is finite for all $\epsilon > 0$.
  \begin{theorem} \label{hirec}
      Let $(\bfTh, \bfT)$ be an environment process.
     Suppose there is  some $\chi   >  1$, so that
     \be\label{psilimit2} \limsup_{ \epsilon \to 0 }   \frac{ \log|\log \psi_h(\epsilon)| }{\log K_2(\epsilon) }   > \chi ,     \ee
       then  $h \in F_{strong}$.
   \end{theorem} 

 Now from Theorem \ref{hirec} we can show the set $F_{strong}$ is nonempty.

\begin{theorem}\label{strongfun}
  Let $(\bfTh,\bfT)$ be an environment process and let $\theta_0 \in \bfTh$. There exists a function $ f_{\theta_0} \in \cC_{fin,1}^+(\bfTh)\cap F_{strong}$ such that $f_{\theta_0}^{-1}(0) = \theta_0$. Moreover, given $\epsilon > 0$, $f_{\theta_0}$ may be chosen such that $f_{\theta_0}(\theta) = 1$ for all $\theta$ such that $r(\theta_0,\theta) > \epsilon$.
\end{theorem}

   \subsubsection{Phase transition in the random case} \label{randomcon}
   Conditions for localization in the random case were first investigated by  Campanino and Klein \cite{cam91}.
   In that paper, a multiscale argument demonstrated localization for sufficiently nice  i.i.d. parameters $\lambda_{(x,y)}$ and $\delta_x$.  The   argument was optimised by Klein in \cite{klein94}. There, it is shown that, if for some $\alpha > \alpha_d := 2d^2(1 + \sqrt{1 + \frac{1}{d}   } + \frac{1}{2d}  )$,  the moment conditions 
   \begin{equation}\label{kleincondition1}
     \left\langle \ln\left(1 + \frac{1}{\delta}\right)^\alpha\right\rangle
        <\infty;\hspace{.5in}
         \langle\ln(1 +\lambda)^\alpha\rangle <\infty
   \end{equation}
   are satisfied, then the ground state is almost surely short range ordered
    provided a   low density assumption 
   \begin{equation}\label{kleincondition2}
      \left\langle\left(\ln\left(1 + \frac{\lambda}{\delta}\right)
         \right)^\alpha\right\rangle<\epsilon,
   \end{equation} 
   holds  for  sufficiently small $\epsilon > 0$. 
   Moreover, in this regime (\ref{order}) almost surely has exponential decay. For constant $\lambda$, (\ref{kleincondition1}) reduces to a condition on the distribution of $\delta$ near zero. Thus, for sufficiently small $\lambda$, the ground state is in the  short range phase, so that (\ref{kleincondition1}) is a condition for  weak disorder.

   On the other hand, Aizenman, Klein and Newman
    \cite{AizKleinNewman} investigated conditions 
   for $\delta$ so that the ground state is in the long 
   range phase for all $\lambda >0$.
   As these distributions eliminate the phase transition
   they are known as strong disorder.
   For $d = 1$ the strong disorder condition is
   \begin{equation} \label{AKNcondition1}
         \lim_{u\to\infty} \frac{u}{|\ln(u)|}
       {\bf P}\left(\left\{ \ln\left(1 + \frac{1}{\delta}\right) 
          > u \right\} \right)  = \infty;\hspace{.5in}
		 {\bf E_P}\left( \delta + \frac{1}{\lambda}  \right)  < \infty .
    \end{equation}
For $d\geq 2$  for $\delta$ satisfying
   \begin{equation}\label{AKNcondition2}  
    \lim_{u\to\infty} u^{d}{\bf P}
        \left(\left\{ \ln\left( 1 + \frac{1}{\delta}\right) 
          > u \right\} \right)  = \infty
   \end{equation} 
   the ground state is long range ordered for all 
   constant $\lambda > 0$.
   
   As indicated by Theorems \ref{introtheorem} and \ref{introtheorem2} the disorder conditions are mirrored in  aperiodic systems. There is a short range phase for small $\lambda$ for generic sampling functions  and a persistent long range phase for all $\lambda > 0$ only for pathological behavior of the sampling functions near zeros.
 
    Let us compare conditions (\ref{kleincondition1}) to (\ref{AKNcondition2}) in the random case  to conditions in Theorem (\ref{rotationtheorem}) to (\ref{hirec}) in the quasiperiodic case.  In the quasiperiodic system,  discussed in Theorem \ref{rotationtheorem}, the ergodic measure $\mu$ is just the Lebesgue measure. Roughly, conclusion (1.) states a sufficient condition for $h \in F_{strong}$ corresponds to existence of an $a > 1$ so that, for some constant $c>0$ and for sufficiently small $\epsilon$,  $ \epsilon^{-1}  \mu ( \{\theta: | \log  h^{-1}(\theta)  | > \epsilon^{-a} \} ) > c $. It follows that $ \epsilon^{-1}  \mu ( \{\theta: | \log  h^{-1}(\theta)  | > \epsilon^{-1} \} ) \to \infty  $ as $\epsilon \to 0$  which is similar to the iid case.
    On the other hand, conclusion (2.) implies that $h$ in which admits a localized phase 
    corresponds to $\mu ( \{\theta:  \log [ 1 + h^{-1}]  > m \}  ) < m^{-b} $ for some $b> 1$.
    Thus if $ b> \alpha > 1$ then $\mu ( \{\theta:   \log^\alpha [ 1 + h^{-1}]  > m \}  ) < m^{- b/\alpha} $ 
    then $\bbE[ \log^\alpha ( 1 + h^{-1})     ] < \infty $ which corresponds to (\ref{kleincondition1}). 
    \label{randomdyncom}

  \subsubsection{Organization of the paper}
    The rest of the paper is organized as follows. In Section \ref{models} we  introduce the  random cluster model on $\bbZ^d\times \bbR$ and relate it to the quantum Ising model. In Section \ref{proofs} we introduce and study the properties of $\wh M$ a connectivity parameter in the percolation model. In the context of $\wh M$, we prove the 0-1 claim in Section \ref{reg2} and we construct the $G_\delta$ set in Section \ref{Borel}. In Section \ref{Secmainproof}, we establish an equivalence of  Theorems \ref{introtheorem} and \ref{introtheorem2}. We complete the section by constructing the necessary functions to demonstrate Theorems  \ref{fasterthanpolynomial} and \ref{strongfun}.

     We introduce the multiscale analysis in Section \ref{msa}, which states $\bfC_{\nu,m}$-localization follows from a well behaved environment. In  Section \ref{low}, we demonstrate  Theorem \ref{lowRecurrence} by showing the condition (\ref{psilimit}) is sufficient to demonstrate the enviroment is well behaved. In Section \ref{ooc} we prove Theorem \ref{hirec} by constructing an auxiliary bond-site percolation model coupled to the original percolation model, we show an infinite cluster in the bond-site model implies an infinite cluster in the original model. Finally, we conclude the paper applying the results to the phase transition in the quasiperiodic case and prove  Theorem \ref{rotationtheorem} in Section  \ref{rotations}.

\section{FK representation}\label{models}

     Quantum spin models may be related by a Fortuin Kasteleyn representation to a percolation processes called the random cluster model. The percolation model takes place on $\bbZ^d \times \bbR$. Measurements of observables in the Ising model, such as the correlation function $ \langle \sigma_x^{(3)} \sigma_y^{(3)} \rangle_{\delta,\lambda}  $ are equal to communication probabilities in the random cluster model.  

      The random cluster model can be defined in terms of a product measure percolation model, called continuous-time percolation \cite{bez91}, which we will introduce first. Despite the name, the `time' dimension in the model is non-oriented. The oriented version of continuous-time percolation is the well known contact process. Essentially, allowing communication in the negative time direction recovers   continuous-time percolation. Moreover, there are stochastic dominations between the random cluster model and continuous-time percolation, which will be useful for demonstrating localization and percolation in various regimes.

   \subsection{Continuous-time percolation}
   We begin with the graph $\lattice = \left(\zd,\bbE^d\right)$, where $\bbE^d$ is the set of nearest neighbor pairs $\{x,y\}$ in $\zd$ so that  $\|x-y\|_1 = 1$. The parameters $\delta: \bbZ^d \to (0,1)$ and $\lambda > 0$ define an environment. For every $x\in \zd$ there is a Poisson process of deaths on $\{x\}\times\real$ at rate $\delta(x)$, and for every edge $\{x,y\}\in\bbE^d$, there is a Poisson process of bonds at a rate $\lambda$ on $\{ x,y\}\times\real$.
   The measure of the Poisson process of deaths on $\{x\}\times\real$
   will be denoted as $\overline\bfQ_{\delta;\lambda}^{x}$,
   similarly the Poisson process of bonds for any 
   $u\in\bbE^d $ will be labeled $\overline\bfQ_{\delta;\lambda}^{u}$.
   The space of realizations 
   for the Poisson measures $\overline\bfQ_{\delta;\lambda}^{*}$
   for each $*\in \bbZ^d$, respectively $*\in\bbE^d$, is the set of all
   locally finite sets of points in $\{x\}\times\real$
   denoted $\cons_x$,
   respectively locally  finite sets in $ u \times\real$
   denoted $\cons_u$.
   Any locally finite set $\omega$ in 
   $\left(\zd\cup\bbE^d\right) \times \real$
   is called a configuration,
   and the space of all configurations
   is denoted $\cons$,
   it is the product of all sets $\cons_*$,
    that is  $ \cons = 
        \left(\times_{x\in\zd} \cons_x\right)\times
        \left(\times_{u\in\bbE^d} \cons_u\right).$
   The percolation measure on $\cons$ 
   is the product measure of these Poisson processes,
   \begin{equation}\nonumber
    \csm{\delta}{} =    
        \left(\prod_{x\in\zd} \overline\bfQ_{\delta;\lambda}^{x}\right)
        \left(\prod_{u\in\bbE^d} \overline\bfQ_{\delta;\lambda}^{u}\right).
   \end{equation}
  When $\delta$ is defined by a sampling function $\delta(x) = h(\bfT^x\theta)$ we use the notation 
  $\bfQ_{h,\theta;\lambda} =\bfQ_{\delta;\lambda}  $
  For given $\omega$ let us denote the set of deaths 
  by $\bfD_\omega$ and the set of bonds by $\bfB_\omega$. 
  Formally, elements of $\bfD_\omega$ are singletons of $ \bbZ^d \times \bbR $ and elements $\{(x_1,t_1),(x_2,t_1)\} \in \bfB_\omega  $ are subsets of  $ \bbZ^d \times \bbR $  order 2 so that $\{x_1,x_2\} \in \bbE^d$ and $t_1 = t_2$.

    The proper topology of $\cons$ is the Skorohod topology, roughly speaking, a configuration $\omega$ is close to $\omega'$ if bonds and cuts in the configurations are close on bounded sets. Let $\cF$ be the the $\sigma $-algebra generated by the Skorohod topology. For a discussion of the Skorohod topology in the present context, see \cite{Grimmett10}. See  \cite{ethier} for a thorough background of the Skorohod topology.

   Any configuration $\omega$ induces a partition on 
    $(\bbZ^d \times \bbR) \setminus \bfD_\omega$. If $ W = x\times (t_1,t_2) \subset \bbZ^d \times\bbR $ is an interval so that $W\cap \bfD_\omega = \emptyset$, then the points of $W$ belong to the same component of the partition. For any $W \in \bfB_\omega $ the two points of the $W$ belong to the same component of the partition.
  Two points $X,Y\in\lxt$  communicate in $\omega$
   if there is a path  
   \be \label{compath}   W =  (W_1,W_2,\ldots,W_m) \ee
   so that, $X\in W_1,$ $Y\in W_m$, and  each $W_i $ is either an interval not intersecting a death or a bond. We write $C_\omega(X) $ for the component of the partition containing $X$,  and $X \lra Y$ if $Y \in C_\omega (X)$.

    Note that communication is permitted in both the negative and positive `time' direction. If communication is permitted only in the positive time direction, the model becomes the well known contact process.
   
   Here we will introduce some notation and terminology for bounded sets. Let $\Lambda_L(x) = \{y\in \bbZ^d : \|y-x\|\leq L\}$. Let
    Cylinders are sets of the form $B= W \times I$ for $W \subset\subset \bbZ^d$ and $I = [a,b]$. In particular we write $B(n) = \Lambda_n(0)\times [-n,n]$. The boundary of $W\subset \bbZ^d$ is defined as
       \[ \partial W = \{y\in W|\exists y' \in W^c
          \textrm{ so that } \langle y,y'\rangle\in \edge\}. \]
    Given a cylinder $B$,  the horizontal boundary of is $ \partial_H B = W\times\{a,b\}  $
     and the vertical boundary is $  \partial_V B = \partial W \times I$.
     The boundary of a cylinder $B$ is then $ \partial B = \partial_H B \cup \partial_V B  $. 
     
    Given a cylinder $B$ we say $X$ communicates with $Y$ in $B$ if there is a path (\ref{compath}) which is entirely contained in $B$. We write $C_{B,\omega}(X)$ for the component of the partition  of $B\setminus \bfD_\omega$ containing  $ X$.  A boundary condition of a cylinder $B$ is a collection of non intersecting Borel measurable subsets of the boundary $\partial B$. Points within the same subset of the boundary condition are declared to belong to the same component in the percolation sense described above. Formally, given a partition $\bfb$ of  $B$ we say $X$ communicates  with $Y$ with respect to boundary condition $\bfb$, if $C_{B,\omega}(X) = C_{B,\omega}(Y)$ or if there is an element $W\in \bfb $ so that $ C_{B,\omega }(X) \cap W \neq \emptyset $ and  $ C_{B,\omega }(Y) \cap W \neq \emptyset $. Finally, we write $ C_{B,\omega}^{\bfb}(X) $ for the set of points $Y$ communicating with $X$ in $B$ with respect to the boundary condition $\bfb$. 
    
    The most important boundary conditions are the `free', `periodic', and `wired' boundary conditions. In the wired boundary $\bfw = \{ \partial B\}$, all points of the boundary are declared to communicate. On the other hand,  the free boundary is  empty: $\bff = \emptyset $. Thus, for any $\omega$, and $X \in B$ the components of $X$ under the free boundary condition are exactly those which communicate within the cylinder: $ C_{B,\omega}^{\bff}(X) = C_{B,\omega} (X)$.   
 For a  cylinder $B = \Lambda \times [-\beta , \beta]$, the periodic boundary condition is,
     \[  \bfp =
       \{ \{(x,\beta),(x,-\beta)\}:x\in \Lambda \}   \]
  which may be visualized as constructing the percolation environment on $\Lambda \times 2\beta \bbT$.

  \subsection{Random cluster measures}    

      The  random cluster measure is first introduced on bounded cylinders  $B \subset\subset \bbZ^d\times \bbR$. For $\omega\in \cons$, let  $\omega_B =  B\cap( \bfD_\omega \cup \bfB_\omega)$ be the restriction of $\omega$ to $B$, and let $\cons_B$ be the set of all such $\omega_B $. For $B \subset \bbZ^d \times \bbR$ we write $\cF_B$ for the $\sigma$-algebra of events generated by the Skorohod metric which depend only on $\bfD_\omega \cap B$ and $\bfB_\omega \cap B$. We will now construct the random cluster measure on $(\cons_B,\cF_B)$.

    We define  $k_B^{\bfb}:\cons\to\zee^+$ to be  the function counting the number of clusters in $B = \Lambda \times [-T,T]$ with respect to boundary condition $\bfb$.  $k_B^{\bfb}$ is almost surely finite for compact $B$ and therefore well defined. Indeed, the number of clusters is bounded by $|\Lambda|  +  |\bfD \cap B| $, so letting $K = \max_{x\in \Lambda} \delta(x)$, we see $ k_B^\bfb - |\Lambda| $ is bounded by a Poisson random variable with rate $2 TK |\Lambda|$.
    Moreover,  for $q > 0 $, 
   $ q^{k_B^{\bf}(\cdot)}\in L_1(\cons|B, \csm{\delta}{ }|_B)$,  
   \begin{equation}\label{L1}
      \bfQ_{\delta;\lambda}(q^{k_B^{\bfb}})\leq 
      \bfQ_{\delta;\lambda}\left(q^{ |\bfD \cap B|  + |\Lambda|}\right) \leq q^{|\Lambda|}\exp\{(q - 1)\times  2 T K  |\Lambda| \}.
   \end{equation} 
   Thus, given $B$ with boundary condition ${\bfb}$, we can define the continuous time random cluster measure on $\cF_B$  by  
   \be  \label{rcm2} \bfQ_{\delta;\lambda}^{(q)}|_{B}^{\bfb}(A) :=
         \frac{\int_{\cons_B} {\bf 1}_A(\con) q^{k_B^\bfb(\con)} d\bfQ_{\delta;\lambda}(\con) }
              {\int_{\cons_B} q^{k_B^\bfb(\con)} d\bfQ_{\delta;\lambda}(\con) }   .\ee
  for $A \in \cF_B$. Note that if we set $q=1$ we recover the independent percolation model.

  \subsection{Classical Ising model in $d + 1$ dimensions}\label{clasd}
   
     The random cluster measure will assist us in calculating quantities such as the two point function (\ref{twopoint}). 
   
     We fix a set $\Lambda \subset \subset \bbZ^d$ and finite inverse temperature $0 < \beta < \infty$. As above, associate the periodic boundary condition $\bfp$ to the cylinder $B = \Lambda \times [-\beta , \beta]$. For any $\omega \in \cons$ we say a map $\sigma: B \to \{1,-1\}$ is conditioned to $\omega$ and $\bfp$  if $\sigma $ is constant on all clusters $C_{\omega,B}^{\bfp} (X)$ for $X \in B$. We write $\Gamma_{\omega,B}^{\bfp}$ for the set of such maps. Note that functions $\sigma$ here are configurations of the  classical Ising model.
   Recall the single site Pauli matrices are  $\sigma^{(1)} = \begin{pmatrix} 0&1\\1&0 \end{pmatrix}$,
     $\sigma^{(2)} =  \begin{pmatrix} 0&-i\\i&0 \end{pmatrix} $,
      and  $\sigma^{(3)} =   \begin{pmatrix} 1& 0 \\0&-1 \end{pmatrix} $.
   We wish to consider expectations of monomials
  \be \label{monomial}   \left\langle \prod_{i}  \sigma_{x_i}^{(a_i)}  \right\rangle^{(\Lambda,\beta)}_{\delta,\lambda} \ee
    where $x_i \in \Lambda$ and $a_i \in \{1,2,3\}$.
   It is convenient to introduce the auxilliary single site operator 
   \[   \sigma^{(0)} = \begin{pmatrix} 1&1\\1&1 \end{pmatrix}. \] 
   Indeed, $\sigma^{(1)} = \sigma^{(0)} - I$ and $\sigma^{(2)} = i(\sigma^{(0)} - I) \sigma^{(3)}$ so we can reduce (\ref{monomial}) to polynomials with $a_i \in \{0,3\}$. To avoid an ambiguity in the FK construction we replace each factor of $\sigma^{(3)}_x$ with a factor  of $ e^{\epsilon H_\Lambda}\sigma^{(3)}_x e^{- \epsilon H_\Lambda}$ and take the $\epsilon \searrow 0$ limit.
   A factor of $e^{\epsilon H_\Lambda}\sigma^{(3)}_x e^{- \epsilon H_\Lambda}$ yields a factor of the sign of $\sigma(x,\epsilon)$ in the FK expansion.  Placing a factor of $\sigma^{(0)}_x$ allows a change of sign of $\sigma(x,t)$ at $ t = 0$. Thus, given $\omega$ we we define $\omega_{y_1,y_2,..y_n}$ the configuration with bonds $\bfB_\omega $ and deaths $\bfD_\omega \cup \{(y_1,0),..., (y_n,0)\}$. For monomial (\ref{monomial}) let $\bfx $ be the sublist of $(x_i)$ so that $a_i = 0$ and $\bfy$ be the sublist of $(x_i) $ so that $a_i = 3$. The FK representation states \cite{AizKleinNewman},  
     \be \label{mono2fun} 
      \left\langle \prod_{i} 
       \sigma_{x_i}^{(a_i)}  \right\rangle^{(\Lambda,\beta)}_{\delta,\lambda} 
       = \frac{1   }
     {\int_\cons 2^{k_B^\bfp(\con)} d\csm{\delta}{ }(\con) }
     \lim_{\epsilon \searrow 0}    
     \int_\cons  \sum_{\omega \in \Gamma_{\omega_\bfx}^\bfp} 
           \prod_{x\in \bfy} \sigma(x,\epsilon) d\csm{\delta}{ }(\con),
            \ee 
     Thus, expectations of  polynomials in $\bbC[\sigma^{(a)}_x] $ can be computed from the measure (\ref{rcm2}) (\ref{monomial}) can be calculated by Skorohod measurable functions on $\cons$. In particular we observe that   
         \be  \label{qim rcm rel}
             \langle \sigma_x^{(3)} \sigma_y^{(3)} \rangle^{(\beta , \Lambda)}_{\delta,\lambda} =    \csm{\delta}{(2)}|^\bfp_{\Lambda \times [-\beta,\beta]}\{ (x,0) \llra (y,0)  \}.
    \ee   
     In the next section, we show the existence of the limit of $\csm{\delta}{(2)}|^\bfp_{\Lambda \times [-\beta,\beta]}$ as $\Lambda \to \bbZ^d$ and $\beta \to \infty$.
   \subsection{Positive events}\label{posy}

   The set of configurations, $\cons$, enjoy a partial ordering property. For $\con,\con'\in\cons$, if $\bfD_{\omega'} \subset \bfD_{\omega}  $ and $ \bfB_{\omega } \subset \bfB_{\omega'} $, then we write the ordering  $\con \leq \con'$. A set $U\subset\cons$ is said to be positive if $\con\in U$ and $\con \leq \con'$ together imply $\con'\in U$. Notice communication events are positive, as $C_\omega (X) = C_\omega( Y)$ and $\con \leq \con'$ imply $ C_{\omega'}(X) =  C_{\omega'}(Y)$.

   Measures ${\bf Q}$ 
   on $\cons$ enjoy a partial ordering property as well called
   stochastic ordering.
   If, for all positive sets $U$,
   $ {\bf Q}(U) \leq {\bf Q'}(U)$ we say ${\bf Q'}$ dominates ${\bf Q}$  and we write $ {\bf Q} \leq {\bf Q'}$. On the other hand, an event $U $ is negative if $U^c$ is positive, we say $U$ is monotonic if it is either positive or negative.

    The random cluster models may be  bounded above and below by independent percolation models, as we see in the following proposition. 
      \begin{proposition}\label{RCMordering} Let $B \subset \bbZ^d \times \bbR$ and let $\bfb$ be any boundary condition.
         Let  $ q \geq q'\geq 1$, $\lambda,\lambda' \in [0,\infty)$ and $\delta,\delta':\zd\to[0,\infty )$. If $ \lambda' \geq \lambda$ and $ \delta'\leq\delta$ we have the ordering,
         \begin{equation}\label{order1}
          {\bf Q}^{(q)}_{\delta;\lambda}|_{B}^\bfb \leq {\bf Q}^{(q')}_{\delta';\lambda'}|_B^\bfb 
         \end{equation}
         On the other hand, if $ \lambda'\leq \lambda q'/q$ and $\delta'\geq\delta q/q'$
         \begin{equation}\label{order2}
            {\bf Q}^{(q')}_{\delta';\lambda'}|_B^\bfb \leq {\bf Q}^{(q)}_{\delta;\lambda}|_B^\bfb 
         \end{equation}
      \end{proposition}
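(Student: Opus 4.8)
The plan is to reduce to a bounded cylinder, approximate the continuous-time random-cluster measure by a discrete one on a mesh that is fine in the time direction, apply Holley's inequality on the resulting finite graph, and then let the mesh tend to zero; the inequalities for the infinite-volume measures will then follow on letting the cylinder exhaust $\bbZ^d\times\bbR$, since communication events are positive. Concretely, I would fix $B=\Lambda\times[-T,T]$ and, for $n\in\bbN$, introduce the finite graph $G_n$ with vertex set $\Lambda\times(\frac1n\bbZ\cap[-T,T])$, horizontal edges $\{(x,t),(y,t)\}$ for $\{x,y\}\in\bbE^d$, and vertical edges $\{(x,t),(x,t+\frac1n)\}$. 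On $\{0,1\}^{E(G_n)}$ I would take the random-cluster measure $\phi^{(n)}_{\delta;\lambda,q}$ in which horizontal edges are open with probability $1-e^{-\lambda/n}$, a vertical edge over the site $x$ is open with probability $e^{-\delta(x)/n}$, independently, and the law is tilted by $q^{k(\cdot)}$ with $k$ the number of open clusters in $G_n$. Reading a closed vertical edge as a death and an open horizontal edge as a bond identifies such a sample with an element of $\cons|_B$, and I would use the standard fact that $\phi^{(n)}_{\delta;\lambda,q}\to{\bf Q}^{(q)}_{\delta;\lambda}|_B$ weakly in the Skorohod topology (cf.\ \cite{bez91,Grimmett10}), the convergence of the normalising constants being controlled by (\ref{L1}) and the a.s.\ finiteness of the number of deaths.

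On $G_n$ I would then invoke Holley's inequality: if strictly positive measures $\phi_1,\phi_2$ on $\{0,1\}^{E(G_n)}$ satisfy $\phi_1(\omega\wedge\xi)\,\phi_2(\omega\vee\xi)\geq\phi_1(\omega)\,\phi_2(\xi)$ for all $\omega,\xi$, then $\phi_1\leq\phi_2$. Taking $\phi_i=\phi^{(n)}_{\delta_i;\lambda_i,q_i}$ with edge-weights $\pi_i(e)=p_i(e)/(1-p_i(e))$, the normalisations cancel and the Holley condition becomes
\[
   q_1^{\,k(\omega\wedge\xi)-k(\omega)}\;q_2^{\,k(\omega\vee\xi)-k(\xi)}\prod_{e:\,\omega_e=1,\,\xi_e=0}\frac{\pi_2(e)}{\pi_1(e)}\;\geq\;1 .
\]
Since deleting an edge changes $k$ by at most one and $k$ is supermodular on $\{0,1\}^{E(G_n)}$, one has $0\leq k(\xi)-k(\omega\vee\xi)\leq k(\omega\wedge\xi)-k(\omega)\leq\#\{e:\omega_e=1,\,\xi_e=0\}$, and an elementary manipulation exploiting $q_1,q_2\geq 1$ establishes the displayed inequality---and hence $\phi_1\leq\phi_2$---in either of two regimes: (i) $q_1\geq q_2\geq 1$ together with $\pi_1(e)\leq\pi_2(e)$ for every edge $e$; or (ii) $q_2\geq q_1\geq 1$ together with $\pi_1(e)/q_1\leq\pi_2(e)/q_2$ for every edge $e$. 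These are the familiar random-cluster comparison inequalities, cf.\ \cite{Grimmett10}.

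It then remains to match the hypotheses of the proposition to (i) and (ii). The edge-weights of $\phi^{(n)}_{\delta;\lambda,q}$ are $e^{\lambda/n}-1$ on horizontal edges and $(e^{\delta(x)/n}-1)^{-1}$ on vertical edges over $x$. For (\ref{order1}), the assumptions $\lambda\leq\lambda'$, $\delta'\leq\delta$, $q\geq q'\geq 1$ make the $(\delta,\lambda)$-model the one with the pointwise smaller edge-weights and the larger cluster exponent, so regime (i) with $\phi_1={\bf Q}^{(q)}_{\delta;\lambda}|_{G_n}$ and $\phi_2={\bf Q}^{(q')}_{\delta';\lambda'}|_{G_n}$ applies. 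For (\ref{order2}), since $t\mapsto(e^t-1)/t$ is increasing and $\lambda'\leq\lambda q'/q\leq\lambda$, $\delta'\geq\delta q/q'\geq\delta$, one checks $(e^{\lambda'/n}-1)/q'\leq(e^{\lambda/n}-1)/q$ and $(q'(e^{\delta'(x)/n}-1))^{-1}\leq(q(e^{\delta(x)/n}-1))^{-1}$, which are exactly the hypotheses of regime (ii) with $\phi_1={\bf Q}^{(q')}_{\delta';\lambda'}|_{G_n}$ and $\phi_2={\bf Q}^{(q)}_{\delta;\lambda}|_{G_n}$. Letting $n\to\infty$---stochastic domination survives weak limits on the Polish space $\cons|_B$, by testing against bounded continuous increasing functionals and invoking Strassen's theorem---and then letting $B$ exhaust $\bbZ^d\times\bbR$ gives (\ref{order1}) and (\ref{order2}).

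The step I expect to demand the most care is the discrete approximation itself: one must check that $\phi^{(n)}_{\delta;\lambda,q}$ genuinely converges---normalising constants included---to ${\bf Q}^{(q)}_{\delta;\lambda}|_B$, and that the discrete partial order is carried into the continuum one under the identification of closed vertical edges with deaths, so that increasing events on $\cons|_B$ pull back to increasing events on $\{0,1\}^{E(G_n)}$. This is routine but is where the analytic bookkeeping lives; by contrast the algebraic heart of the proposition---the two comparison inequalities---is an immediate consequence of Holley's inequality together with the supermodularity of the cluster-count function, and it is exactly there that the hypothesis $q,q'\geq 1$ is used. (One could instead bypass the discretisation by appealing to a form of Holley's inequality stated directly for such Poisson-driven configurations, but that route carries essentially the same limiting argument within it.)
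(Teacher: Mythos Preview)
Your argument is correct. Note, however, that the paper does not actually prove this proposition: immediately after the statement it writes ``a proof of Proposition \ref{RCMordering} in the continuous context can be found in \cite{bjornberg}'' and moves on. So there is no in-paper proof to compare against; what you have written is precisely the standard route taken in the cited reference (and in \cite{Grimmett10} for the discrete model): discretise the time axis, apply the random-cluster comparison inequalities via Holley on the resulting finite graph, and pass to the limit. Your matching of the hypotheses to the two Holley regimes is right, including the use of the monotonicity of $t\mapsto(e^t-1)/t$ for (\ref{order2}), and your caveat about the analytic bookkeeping in the discrete-to-continuum limit is well placed---that is indeed where the work in \cite{bjornberg} lies.
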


   This ordering is similar to the ordering for the measure of the  discrete random cluster model \cite{G2010},
   a proof of  Proposition \ref{RCMordering} in the continuous context 
   can be found  in \cite{bjornberg} as Theorem 2.2.12.
   For our purposes, communication events $A = \{\omega : C_\omega(X) = C_\omega(Y)\}$ are the relevant positive events. Notice that (\ref{order1}) and (\ref{order2}) allow us to bound, above and below, percolation events in the random cluster model with percolation events in axillary independent percolation models. 

    We will also require the FKG inequality bounding probabilities of intersections of positive events. We call an event $U \subset \cons$ an event of continuity, if $\partial U$ is a set of measure 0 with respect to the independent percolation measure $\bfQ_{\lambda;\delta}$. Notice the random cluster measures $\bfQ_{\lambda;\delta}^{(q)}|_B^{\bfb}$ are continuous with respect to the independent percolation measures $\bfQ_{\lambda;\delta}^{(1)}$.      
    
    \begin{proposition} \label{pfkg}Let $B = \Lambda \times I$ be a bounded cylinder and let $\bfb$ be any boundary condition. Let $U,V$ be monotonic events of continuity. For any non-negative $\delta: \Lambda \to \bbR^+ $, non-negative $\lambda \geq 0$, and  any $q \geq 1$ we have
   \begin{equation}\label{fkg}
   \csm{\delta}{(q)} |_B^{\bfb} \left(U\bigcap V\right) 
         \geq \csm{\delta}{(q)} |_B^{\bfb}  (U)
         \csm{\delta}{(q)} |_B^{\bfb}  (V).
   \end{equation} 
   \end{proposition}
   This result is similar to the inequality for the discrete random cluster model \cite{G2010}, it appears in the continuous context as Theorem 3.1 of \cite{AizKleinNewman}.
   \begin{remark}
    For any $B \subset \bbZ^d \times \bbR$ and boundary condition $\bfb$, if $U = \{C_B^{\bfb}(X) = C_B^{\bfb}(Y)\}$ is a communication event then $\partial U$ has measure zero, which we prove in Proposition \ref{zerobdry}. Thus communication events are events of continuity.  
    \end{remark}

   Finally, let us state the existence of the infinite volume random cluster measure. For the selected sets $B(k) =\{(x,t): \|x\|_\infty \leq k; \; |t| \leq k\}$,  let us write  $\cF_k \equiv \cF_{B(k)}$ and write the  tail $\sigma$-algebra  as $\cF_\infty = \cap_n \sigma( \cup_{k\geq n} \cF_{k} )$. Let $\bfQ_k$ be a measure defined on $(\cons_{B(k)},\cF_k)$. Let $\bfQ$ be a measure on $(\cons,\cF)$. We say $\bfQ_k \to \bfQ$ weakly, if for any event of continuity $U\in \cF_k$, for some $k$,  $ \lim_{n \to \infty}\bfQ_n(U) \to \bfQ(U)$.    
   
   In \cite{bjornberg} the following appears as Theorem 2.3.2 for $\bfb = \bfw,\bff$, but the proof is similar for $\bfb = \bfp$.
      \begin{proposition}\label{weaklim} 
         Let $\lambda \in \bbR$ be non-negative and let $\delta : \bbZ^d \to \bbR$ be a non-negative bounded function. For $\bfb = \bff,\bfp,\bfw$, there is a limiting measure $\csm{\delta}{(q)|\bfb} $, so that  $\csm{\delta}{(q)}|^{\bfb}_{B(n)}\to\csm{\delta}{(q)|\bfb}$ weakly.  
      \end{proposition}
    In fact, the limiting measure  is independent of the sequence of sets $B(n)$.   
      
  As discussed in  Section \ref{clasd}, we are interested in $q = 2$. For bounded non-negative  $\delta$, the sequence of measures $\csm{\delta}{(2)}|^\bfb_{B(n)}$ converges weekly to a measure $\csm{\delta}{(2)|\bfb}$. Combining (\ref{mono2fun}) and Proposition \ref{weaklim} we can now find the infinite volume expectation of polynomials $\bbC[\{\sigma_x^{a}\}_{x\in \bbZ^d, a=1,2,3}]$ which is sufficient to determine the expectation of all local operator and therefore define the ground state.  We can now formally define our two point function as the limit
   \begin{align} \label{fl2}  \langle \sigma_x^{(3)} \sigma_y^{(3)} \rangle_{\delta;\lambda} &  := \lim_{\beta \to \infty, \Lambda \nearrow \bbZ^d}   \langle \sigma_x^{(3)} \sigma_y^{(3)} \rangle^{(\beta , \Lambda)}_{\delta;\lambda} \\
 \nonumber
  & 
  = \nonumber        
   \lim_{\beta\to\infty,\Lambda\nearrow \bbZ^d} \bfQ_{\delta;\lambda}^{(2)}|^\bfp_{\Lambda \times [-\beta,\beta]}\left( C_{\Lambda\times [-\beta,\beta]}^{\bfp}(x,0) = C_{\Lambda \times [-\beta,\beta]}^\bfp(y,0) \right).    \end{align}

\section{Regularity of random cluster measures}\label{proofs}
   
    The main goal of this section is to show the Borel regularity of $F_{weak}$. Essentially, the strategy is to show regularity of the percolation measures with respect to $\delta$ and $\lambda$ in finite subsets and extend these properties to the infinite lattice.

   We write probability of communication from the origin to the boundary of $B(n)$ as
   \be \wh M^{(q)|n}_{h,\theta;\lambda} 
   :=  \bfQ^{(q)}_{h,\theta;\lambda} 
   ( \{0,0\}  \lra \partial B(n )       ), \label{mboxn} \ee
   which is clearly decreasing, so the limit
   $\wh M_{h,\theta;\lambda}^{(q)} := \lim_{n\to \infty}
    \wh M^{(q)| n}_{ h,\theta;\lambda} $ exists.
     For fixed $(h,\theta, \lambda)$ we define the set of phases defining absence of long range order to be
  \[ \wh\cU_0  =  \wh\cU_0(h,\lambda) = \{ \theta \in \bfTh : \wh M^{(q)}_{h,\theta;\lambda}   = 0   \}. \] 
   Finally we define $\wh F_\cdot$ similar to $F_\cdot$ for $\cdot \in \{strong, weak\}$. Let $\wh F_{weak}$ be the set of sampling functions in $\cC_{fin}^+(\bfTh)$ such that, for sufficiently small $\lambda > 0$, $\mu(\wh \cU_0) = 1$. Let $\wh F_{strong}$ be the set of sampling functions so that, for all $\lambda > 0$, $\mu(\wh \cU_0) = 0$.
   The behavior of $\wh M^{(2)}_{h,\theta;\lambda}$ is similar to the behavior of $M_{h,\theta;\lambda}$ which we show in Proposition \ref{mmhat compare}. First we state the $G_\delta$ result in the random cluster setting.
     \begin{proposition}\label{gdeltatop}
            For any $q \geq 1$, $\wh F_{weak}$ is a $G_\delta$ in the $(\cC_{fin}^+ (\bfTh) , d_\tau)$ topology.
      \end{proposition} 
   
  \begin{remark} 
   It is possible to derive a similar result without requiring normalization of functions using $d_\tau$. However, in this case we would require the topology to include the coupling parameter. To do this we would define a metric $ d_{\infty}^\sharp( (h,\lambda), (h' , \lambda') ) = d_\infty (h,h') + |\lambda - \lambda'| $.
          Then, a similar proof would show that the set of $(h,\lambda) \in \cC_{fin}^+ \times \bbR^+$   so that   $\mu(\cA_1(h,\lambda))  = 1$ is a $G_\delta$ in  the $(\cC_{fin}^+ \times \bbR^+,   d_\infty^\sharp)$ topology.
  \end{remark}    
    
   Let us define the metric
   \be \label{dnatural}
    d_\infty^\natural( (h,\theta,\lambda), (h',\theta', \lambda') ) =  d_\infty(h, h')   +  |\lambda - \lambda'|+ r(\theta,\theta')\ee 
   which allows us to discuss a notion of continuity of environments.

   \begin{proposition}\label{mreg}
       For any $q \geq 1$, $ \wh M^{(q)}_{h,\theta;\lambda} $ is upper semicontinuous in $d_\infty^\natural$. Moreover, for any $q\geq 1 $, $\lambda>0$, $h  \in \cC^+_{fin} (\bfTh) $,  we have  $\mu(\wh\cU_0(h,\lambda)  ) = 0  $ or $1$.   
   \end{proposition}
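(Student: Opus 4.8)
The plan is to establish the two claims separately: upper semicontinuity of $\wh M_{q;h,\lambda}(\theta)$ in the joint variable $(h,\theta,\lambda)$, and then the $0$-$1$ law for $\mu(\bfTh_0)$. For the first claim, I would work at finite volume and pass to the limit. Each $\wh M^{(n)}_{q;h,\lambda}(\theta) = \bfQ^{(q)}_{h,\theta,\lambda}(\{0,0\}\llra \partial B_n)$ is the probability of a positive event depending on the configuration only inside the bounded box $B_n$, where the number of deaths is a.s. finite by the estimate (\ref{L1}). On a bounded box, the random cluster density relative to the driving Poisson processes is an explicit expression in $\delta$ and $\lambda$ (through the rates and the cluster count $k_{B_n}$), and the communication event $\{(0,0)\llra \partial B_n\}$ is a continuity point of the Skorohod topology outside a measure-zero set of configurations. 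Thus $\wh M^{(n)}_{q;h,\lambda}(\theta)$ is \emph{continuous} in $(h,\theta,\lambda)$: continuity in $\lambda$ and in $\delta = (h(\bfT^x\theta))_x$ follows from dominated convergence applied to the density, and continuity in $\theta$ and $h$ follows because $x\mapsto h(\bfT^x\theta)$ varies continuously with $(h,\theta)$ on the finitely many sites of $B_n$ (here I use that $\bfT$ acts by continuous maps and that $d_\cS$ is the sup norm). Then $\wh M_{q;h,\lambda} = \lim_n \wh M^{(n)}_{q;h,\lambda}$; the events $\{(0,0)\llra\partial B_n\}$ are decreasing in $n$, so $\wh M^{(n)}$ decreases to $\wh M$, and a decreasing limit of continuous functions is upper semicontinuous. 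One subtlety is whether $d_\tau$ or $d_\cS$ is the relevant metric; since we are holding $\lambda$ fixed and only asking for semicontinuity, I will phrase everything in $\wh\rho$ (which uses $d_\cS$), exactly as stated.

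For the $0$-$1$ law, the key is ergodicity of the joint action of $\{T_i\}$ with respect to $\mu$. Fix $q\geq 1$, $\lambda>0$, and $h\in\cC^+(\bfTh)$. The claim is that $\theta\mapsto \mathbf{1}[\wh M_{q;h,\lambda}(\theta) = 0]$ is ($\mu$-a.e.) invariant under each $T_i$, hence under the whole group; then ergodicity of $\mu$ forces $\mu(\bfTh_0)\in\{0,1\}$. To see the invariance, note that the environment seen from $\theta$ versus from $T_i\theta$ are related by a lattice shift: $\delta_{T_i\theta}(x) = h(\bfT^x T_i\theta) = h(\bfT^{x+\mathbf{e}_i}\theta) = \delta_\theta(x+\mathbf{e}_i)$. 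The continuous-time percolation measure and the random cluster measure are built from independent Poisson processes whose rates are translation-covariant functions of the environment, and the communication event $\{0\llra\infty\}$ (equivalently $\inf_n \{(0,0)\llra\partial B_n\}$ having positive probability — actually $\wh M$ is not literally a probability of a single event, but the pointwise limit of the box-connection probabilities) transforms covariantly: the probability that the origin connects far away in the environment from $\theta$ equals the probability that the point $-\mathbf{e}_i$ connects far away in the environment from $T_i\theta$. Since in the infinite-volume limit the connectivity of the origin and of any fixed site $-\mathbf{e}_i$ vanish or persist together (by FKG (\ref{fkg}) and finite-energy / finite-range comparison, or simply because $\wh M$ at one site controls $\wh M$ at a neighbor up to a multiplicative constant coming from a bounded local modification), we get $\wh M_{q;h,\lambda}(T_i\theta) = 0 \iff \wh M_{q;h,\lambda}(\theta) = 0$ for $\mu$-a.e. $\theta$. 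Hence $\bfTh_0$ is a.e. invariant, and the dichotomy follows.

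I expect the main obstacle to be the invariance step for the $0$-$1$ law: $\wh M_{q;h,\lambda}(\theta)$ is defined as a limit of \emph{origin}-to-boundary connection probabilities, not as the probability of a shift-invariant event, so one must argue carefully that "$\wh M$ vanishes at the origin" is equivalent to "$\wh M$ vanishes at a neighboring site." The cleanest way is to observe that connecting $(0,0)$ to $\partial B_n$ and connecting $(-\mathbf{e}_i,0)$ to $\partial B_n$ differ by a configuration change in a single bounded neighborhood, so by the ordering in Proposition \ref{RCMordering} (or a direct finite-energy estimate) there are constants $0 < c_1 \leq c_2 < \infty$, depending only on $\lambda$, $q$, and $\sup h$, with $c_1 \wh M^{(n)}_{q;h,\lambda}(T_i\theta) \leq \wh M^{(n)}_{q;h,\lambda}(\theta)$ up to an error from the shifted box, which washes out as $n\to\infty$; thus the two limits vanish together. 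Combined with the environment shift identity $\delta_{T_i\theta}(\cdot) = \delta_\theta(\cdot + \mathbf{e}_i)$, this gives the a.e. $T_i$-invariance of $\bfTh_0$, and ergodicity of the common measure $\mu$ completes the proof. The upper semicontinuity part is routine once the finite-volume continuity is in hand.
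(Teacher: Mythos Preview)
Your proposal is correct and matches the paper's approach: continuity of each finite-volume $\wh M^{(n)}$ (which the paper obtains via Proposition~\ref{weakc}) together with monotone decrease in $n$ gives upper semicontinuity, and the $0$--$1$ law is exactly the FKG-plus-local-connection comparison you describe, packaged in the paper as Proposition~\ref{regs}. The only cosmetic difference is that the paper phrases the dichotomy as ``$\wh M>\epsilon$ on a set of positive measure implies $\wh M>0$ a.e.'' rather than as $T_i$-invariance of $\bfTh_0$, but the underlying mechanism---FKG applied to an explicit no-cuts-and-bonds event along a lattice path (the event $L_{1,N}$ in the paper, your one-step ``finite-energy'' comparison)---is identical.
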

 
    Proposition \ref{mreg}, is demonstrated at the end of Section \ref{reg2}.   

   \subsection{Continuity of measures}\label{regq}
      This section is devoted to showing  communication events  
        for fixed $ q\geq 1$ are continuous in the space of continuous sample functions,  $\cC^+_{fin}(\bfTh)$. 
       The culmination of the results of this section is
       contained in Proposition \ref{weakc} which follows from
       Lemma \ref{lxtconv}  and Proposition \ref{zerobdry}.

      \begin{proposition}\label{weakc}
        Let  $ A = \{ W_1 \llra W_2\}$ for bounded $W_1$ and $W_2$ be a communication event. 
        Given $(h,\theta,\lambda)$ and $\epsilon$, there is some $\eta > 0 $ so that 
        $d_\infty^\natural ((h,\lambda,\theta),(h',\lambda',\theta')) < \eta$ implies that 
        \[ \left|\bfQ_{h',\theta'; \lambda'}^{(q)}(A) -  \bfQ_{h,\theta; \lambda}^{(q)}(A) \right| < \epsilon \]  
      \end{proposition}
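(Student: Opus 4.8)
The plan is to establish continuity of the continuous-time random cluster measure on a fixed bounded cylinder $B = \Lambda \times [-T,T]$ and then promote this to continuity of a single communication event $A$ (which, though possibly defined by connections at time $0$, can be approximated by events measurable with respect to a large box). I would fix the box $B$ large enough that $W_1, W_2 \subset B$ (for the event $\{W_1 \llra W_2\}$) or that $X, Y \in B$; note that the communication event $A$ restricted to what happens inside $B$ need not be an event in $B$ alone, so the first reduction is to the event $A_B$ that the connection happens using only bonds and dual-cuts inside $B$, and to control $\bfQ^{(q)}(A \setminus A_B)$ by choosing $B$ large. This last tail estimate is uniform in $(h,\theta,\lambda)$ near the fixed triple: since $h$ is bounded away from its behavior in a fixed neighborhood and $\lambda$ is bounded, stochastic domination (Proposition \ref{RCMordering}) by a continuous-time percolation model with constant parameters $\lambda_{\max}$ and $\delta_{\min}$ gives an exponential tail for the cluster of $W_1$ leaving $B$, uniformly in the neighborhood — so $\bfQ^{(q)}(A\setminus A_B) < \epsilon/3$ for all nearby parameters once $B$ is chosen.

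Second, on the fixed box $B$ I would prove continuity directly from the explicit Radon-Nikodym form. Recall $\csm{\delta}{(q)}|_B(\cdot) = \csm{\delta}{}|_B(\,\cdot\, q^{k_B})/\csm{\delta}{}|_B(q^{k_B})$ where the underlying measure $\csm{\delta}{}|_B = \bigl(\prod_x \ol\bfQ^x_{\delta;\lambda}\bigr)\bigl(\prod_u \ol\bfQ^u_{\delta;\lambda}\bigr)$ is a product of Poisson processes: deaths on $\{x\}\times[-T,T]$ at rate $\delta(x) = h(\bfT^x\theta)$ and bonds on edges at rate $\lambda$. As $(h,\theta,\lambda) \to (h',\theta',\lambda')$ the intensities $h(\bfT^x\theta) \to h'(\bfT^x\theta')$ converge (continuity of $h$ and of the finitely many maps $\bfT^x$, $x\in\Lambda$), so each one-dimensional Poisson law on $[-T,T]$ converges in total variation — this is the standard fact that Poisson point processes on a bounded interval depend continuously (in TV) on their intensity. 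Since $B$ involves only finitely many sites and edges, the product measures $\csm{\delta}{}|_B$ converge in total variation. Because $q^{k_B} \in L_1$ with the explicit bound (\ref{L1}), uniform over the parameter neighborhood (again $K = \max_{x\in\Lambda}\delta(x)$ is controlled there), both numerator $\csm{\delta}{}|_B(\mathbf 1_{A_B} q^{k_B})$ and denominator $\csm{\delta}{}|_B(q^{k_B})$ converge; the denominator is bounded below by $q^{|\Lambda|}e^{-2TK|\Lambda|}>0$, so the ratio converges. Combining, $|\bfQ^{(q)}_{(h',\theta');\lambda'}(A_B) - \bfQ^{(q)}_{(h,\theta);\lambda}(A_B)| < \epsilon/3$ for $\eta$ small, and with the two tail estimates this gives the claimed bound $\epsilon$.

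The main obstacle is the first step: the communication event $A$ is genuinely an event on the infinite lattice $\bbZ^d\times\bbR$, and "$C_\omega(X) = C_\omega(Y)$" is not determined by any bounded box, so one must argue that the discrepancy between $A$ and its truncation $A_B$ is small uniformly over a parameter neighborhood. The key input is the stochastic-ordering Proposition \ref{RCMordering}, which lets me bound $\bfQ^{(q)}_{(h,\theta);\lambda}$ above by $\bfQ^{(1)}_{\delta_{\min};\lambda_{\max}}$ (independent percolation with constant parameters), for which the subcriticality / exponential-tail-of-clusters estimates of \cite{bez91} apply — but these are uniform only if $\delta_{\min} > 0$, which holds on a neighborhood of a triple where $\theta$ is not in the zero set; if $\theta$ lies in or near the zero set of $h$ one instead uses that the connection event still has small probability of traveling far because bonds are at bounded rate $\lambda_{\max}$ while any long excursion must still traverse many edges. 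Care is needed here, but the argument is a routine uniform-tail estimate once the domination is in place; the rest is soft continuity of Poisson laws and of a ratio of bounded integrals. \qed
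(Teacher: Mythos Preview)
Your argument has a real gap at the junction between the two steps. In step 2 you establish continuity of the \emph{finite-volume} random cluster measure $\csm{\delta}{(q)}|_B(A_B)$ via total-variation convergence of the underlying Poisson laws and the explicit Radon--Nikodym density $q^{k_B}$; this is essentially the content of the paper's Proposition \ref{finitevolume}. But the quantity in the statement is the \emph{infinite-volume} measure $\bfQ^{(q)}_{(h,\theta);\lambda}(A)$, and your step 1 bounds $\bfQ^{(q)}(A\setminus A_B)$ --- still the infinite-volume measure. Even granting that bound, what you then need is continuity of $\bfQ^{(q)}(A_B)$ in the parameters, i.e.\ of the infinite-volume measure evaluated at a box-local event, and this is \emph{not} what step 2 delivers: the marginal of the infinite-volume $\bfQ^{(q)}$ on $B$ does not have the closed Radon--Nikodym form $q^{k_B}/\csm{\delta}{}|_B(q^{k_B})$ you invoke. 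You are missing a uniform (over a parameter neighborhood) version of the finite-to-infinite-volume approximation, which is exactly the nontrivial part.

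Separately, the tail estimate in step 1 is not justified as stated. The claim that domination by $\bfQ^{(1)}_{\delta_{\min};\lambda_{\max}}$ yields an exponential tail for the cluster of $W_1$ requires the dominating independent model to be subcritical, i.e.\ $\lambda_{\max}/\delta_{\min}<\rho_c$. The proposition carries no such assumption --- it is used later in both phases --- and in the supercritical regime the cluster of $W_1$ reaches $\partial B$ with probability bounded away from zero. It is true that $\bfQ^{(q)}(A\setminus A_B)\to 0$ pointwise (any connection is eventually captured by a large enough $B$), but uniformity over a parameter neighborhood cannot be extracted this way; your fallback remark about ``long excursions traversing many edges'' does not help, since in supercritical continuous-time percolation long excursions are typical. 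The paper avoids both problems: it proves weak convergence of the infinite-volume measures $\bfQ^{(q)}_{\delta_k;\lambda_k}\to\bfQ^{(q)}_{\delta;\lambda}$ on increasing continuity sets (Lemma \ref{lxtconv}) by combining finite-volume weak convergence in the environment (Proposition \ref{finitevolume}) with the infinite-volume limit (Proposition \ref{FiniteVolMeasApprx}) via a diagonalization, and then checks separately (Proposition \ref{zerobdry}) that communication events are continuity sets. No phase assumption enters.
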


   \subsubsection{Regularity with respect to parameters}

       Consider bounded sets $\Lambda \subset \bbZ^d$. 
       Let us introduce a topology on the environments  $(\delta, \lambda)$ where
       $\delta: \Lambda \to (0,\infty) $ 
     $ \lambda > 0$, 
    \be \label{localmet} \|( \delta,\lambda) - (\delta' ,\lambda') \|_\ell =   |\lambda - \lambda'|  + \sup_{x\in \bbZ^d}  |\delta'(x) - \delta(x)| 2^{-|x|}  
    \ee
     the local convergence of environments.  
        We will show convergence in the measures of communication events under this metric.

      The following propositions and their proofs are similar in spirit to \cite{G2010}  (also see \cite{bjornberg} for development in the continuum case). We  need some modifications since we are considering continuity in environments. 
      
      For this section we will make a boundedness assumption on a sequence of environments. Let $K > 0$ and for every $k \in \bbN$ let $\delta_k$ be a function $\delta_k : \bbZ^d \to [0,K]$. Let $\lambda_k$ be a bounded sequence in $[0,K]$. Recall $B_n = \Lambda_n(0) \times [-n,n]$. 
      
      \begin{proposition}\label{tightfamily} For $\bfb = \bff,\bfp,\bfw$, the measures $\bfQ_{\delta_k;\lambda_k}^{(q)}|^{\bfb}_{B(n)}$ form a tight family.
      \end{proposition}
      \begin{proof} 
         Recall that a family of measures $\cM$ on $\bfOm$ is  
         a tight family if, for every $\epsilon > 0$, 
         there exists a set $R_\epsilon\subset\bfOm$, compact in the Skorohod topology, so that
         \[ \inf_{P\in\mathcal{M}} P\left(R_\epsilon\right) > 1-\epsilon .\]
         We have $\delta_k(x) < M$ for all $k$ and $x$, by   Proposition \ref{RCMordering} for all $k$, $ \bfQ_{\delta_k;\lambda_k}^{\bfb} \geq \bfQ^{(q)|\bfb}_{\delta_k;\lambda_k} \geq \bfQ^{\bfb}_{qK;\lambda_k / q} $.

         We introduce a function $\chi_{u}:\bbZ\to\real^+$, for all $u\in\bbZ^d\cup \bbE^d$, to be specified. Let $V_x'\subset\cons|_{\{x\}\times\real}$ be the event that, for all $j$, deaths in $\con\cap\left(\{x\}\times[j,j+1)\right)$ are separated by at least $\chi_x(j)$. Similarly, define  $V_{\langle x,y\rangle}'$ the set with bonds spaced out by $\chi_{x,y}(j)$. The closure $V_u$ of $V_u'$ is compact \cite{ethier}(Theorem 3.6.3). For $\chi_x(j)$ decreasing quickly enough $V_x$ has probability greater than $1 - \frac{1}{(1+ |x|)^{d+2}}\epsilon$; similarly let $\chi_{\langle x,y\rangle}(j)$ decrease quickly so that (let $|x|\geq|y|$) $V_{\langle x,y\rangle}$ has probability greater than $1-\frac{1}{(1+|x|)^{d+3}}\epsilon$.

         By the ordering of measures and the observation that  $V_{\chi;\bbZ^d} = \cap_{x\in\zd}V_x$ is an increasing set, we have for all $n$,
         \[  \bfQ_{\delta_k,\lambda_k}^{(q)}|^\bfb_{B(n)} (V_{\chi;\bbZ^d} )\geq \bfQ_{qK; \lambda_k}^\bfb (V_{\chi;\bbZ^d} ) \geq 1- O(\epsilon). \]
        On the other hand  $V_{\chi;\bbE^d} = \cap_{e\in\bbE^d}V_e$ is decreasing, so
        \[  \bfQ_{\delta_k;\lambda_k}^{(q)}|^\bfb_{B(n)}
              (\bfOm \backslash V_{\chi;\bbE^d})
              \leq \bfQ_{\delta_k;\lambda_k}^\bfb
              (\bfOm\backslash V_{\chi;\bbE^d}) \leq O( \epsilon). \]
         Finally the set $V_\chi = V_{\chi;\bbZ^d} \cap V_{\chi;\bbE^d}$   is compact and  $\bfQ^{(q)}_{\delta_k;\lambda}|_{B(n)}^\bfb(V_\chi) \geq 1- O(\epsilon)$ for all $k$ and $n$.
      \end{proof}
      We will show $\bfQ_{\delta_k;\lambda_k}^{(q)|\bfb} \to \bfQ_{\delta;\lambda}^{(q)|\bfb} $ weakly if $(\delta_k, \lambda_k)\to (\delta, \lambda)$ in the $\|\cdot\|_\ell$ metric. First we will show weak convergence for bounded subsets.
      \begin{proposition}\label{finitevolume}
         Let $ B \subset\subset \bbZ^d \times \bbR$, and let $\bfb = \bff,\bfp,\bfw$, then $\bfQ_{\delta_k;\lambda_k}^{(q)}|_{B}^\bfb \to \csm{\delta}{(q)}|_{B}^\bfb$ weakly.
      \end{proposition}
      \begin{proof}
         It is enough to show the finite dimensional distributions converge
         \cite{billingsley}(Theorem 12.6).
         The finite dimensional distributions in this case are
          events counting the number of bonds and cuts in bounded intervals, that is events of the type,
         \be \label{fdset}
             U^{r_1,\ldots,r_n}_{(z_1;t_1,s_1),\ldots,(z_n;t_n,s_n)} = 
           \{\con:|\con\cap \{z_i\}\times (-t_i,s_i)| = r_i;i=1,\ldots,n\},
         \ee
         where $z_i\in\zd\cup\edge$, $t_i,s_i \in \bbR^+$
         and $r_i\in \zee^+$.
         The difficulty is in the case $q>1$ as the case $q=1$ is simply a product of Poisson distributions.
           Recall, for any $\epsilon>0$ we define $V_\chi$ to be the compact event from Proposition \ref{tightfamily}, so that $|V_\chi| > 1-\epsilon$;
           the spacing of cuts implies for any bounded $B\subset \bbZ^d\times\bbR$, 
           there is some $\cK_\chi<\infty$ so that $k_B(\omega)< \cK_\chi $ for any $\omega\in V_\chi$.

         For $\eta > 0$ and $\chi$ as above, let $f_{\chi,\eta }$ be a continuous function 
         on $V_\chi$ approximating $k_B$. 
         $k_B$ takes on positive integer values and only may change value where a cut \tql moves past\tqr\ a bond,
           that is, $k_B$ is discontinuous at $\omega$ only if there are $x,y,t$ so that $(x,t)\in \omega$ and $(\{x,y\},t)  \in \omega$.
           Therefore for $1/2 > \eta > 0$ we can
          require $f_{\chi,\eta}$ to be bounded by $k_B$ and equal to $k_B$  
          for $\omega$ such that, for any $x,y,t,s$ so that $(x,t) \in \omega$ and $(\{x,y\},s ) \in \omega$
        then $|s-t| > \eta \chi_x(\lfloor t \rfloor) $.

         As $\epsilon,\eta\to 0$,
         by dominated convergence and (\ref{L1}),  
         $I_{\chi, \eta} = \bfQ^{(q)}_{\delta_k;\lambda}|_{B}^\bfb\left(|q^{f_{\chi,\eta}} -q^{k_{B}} |\right)\to 0$.
         In fact, convergence of $I_{\chi,\eta}\to 0$ is uniform in $\delta_k$.
         Indeed, for any $\epsilon$ let $\chi$ be chosen so that
        $   \bfQ_{K;\lambda}^{(q)|\bfb}\left(q^{\cK_\chi}  ; V^c_\chi\right) < \epsilon/2 $.
         Similarly, let $f_{\chi,\eta}$ be chosen so that 
             $\mathcal{O} = \{\omega \in V_\chi:  f_{\chi,\eta} \neq k_B \}$, an open set in $V_\chi$ so that 
         $\bfQ_{K;\lambda}^{(q)|\bfb}  \left(q^{\cK_\chi} ; \cO  \right) < \epsilon/2 $.
         Then, since $V_\chi$ is compact, the set $V_\epsilon = V_\chi\cap\mathcal{O}^c$ is compact,
           and
          $\bfQ_{K;\lambda}^{(q)|\bfb}  \left(q^{\cK_\chi} ; V_\epsilon^c \right) < \epsilon $.
         Thus, for  $f_\chi = k_B$ on $V_\epsilon$ and bounded by $k_B $ 
         on $V_\epsilon^c$, then we have for all $k$
         \[ \bfQ_{\delta_k;\lambda_k}^{(q)|\bfb}\left(|q^{k_B} - q^{f_\chi}|\right)  \leq \bfQ_{K;0}^{(q)|\bfb}\left( |q^{k_B} - q^{f_\chi}|\right)  < 2\epsilon. \]

         Therefore, using weak convergence for such continuous functions,
         $\bfQ_{\delta_k;\lambda}^{(q)}|^\bfb_{B}(g)
           \to \bfQ_{\delta;\lambda}^{(q)}|^\bfb_{B}(g)$
          for $g = q^{f_\chi}$ or $g = q^{f_\chi}{\bf 1}_{U}$,
          for a set $U$ as in (\ref{fdset}).
         Thus,
         \begin{align*}
               & \left|\bfQ_{\delta_k;\lambda}^{(q)}|_{B}(U) - \bfQ_{\delta,\lambda}^{(q)}|_{B}(U) \right|        \\
           & = \left|\frac{\bfQ_{\delta_k;\lambda_k}|_{B}\left({\bf 1}_Uq^{k_{B}}\right)}
                  {\bfQ_{\delta_k;\lambda_k}|_{B}\left(q^{k_{B}}\right)}
            -\frac{\bfQ_{\delta;\lambda}|_{B}\left({\bf 1}_Uq^{k_{B}}\right)}
                  {\bfQ_{\delta;\lambda}|_{B}\left(q^{k_{B}}\right)} \right|  
           \leq 
           \begin{matrix} \left|\frac{\bfQ_{\delta_k;\lambda_k}|_{B}\left({\bf 1}_Uq^{k_{B}}\right)}
                  {\bfQ_{\delta_k;\lambda_k}|_{B}\left(q^{k_{B}}\right)}
                   - \frac{\bfQ_{\delta_k;\lambda_k}{}|_{B}\left({\bf 1}_Uq^{f_\chi}\right)}
                  {\bfQ_{\delta_k;\lambda_k}|_{B}\left(q^{f_\chi}\right)}\right|+
                  \\
                +\left|\frac{\bfQ_{\delta_k;\lambda_k}|_{B}\left({\bf 1}_Uq^{f_\chi}\right)}
                  {\bfQ_{\delta_k;\lambda_k}|_{B}\left(q^{f_\chi}\right)}
                   - \frac{\csm{\delta}{}|_{B}\left({\bf 1}_Uq^{f_\chi}\right)}
                  {\csm{\delta}{}|_{B}\left(q^{f_\chi}\right)}\right| +   
               \\   +\left|\frac{\csm{\delta}{}|_{B}\left({\bf 1}_Uq^{f_\chi}\right)}
                  {\csm{\delta}{}|_{B}\left(q^{f_\chi}\right)}
                  -\frac{\csm{\delta}{}|_{B}\left({\bf 1}_Uq^{k_{B}}\right)}
                  {\csm{\delta}{}|_{B}\left(q^{k_{B}}\right)} \right|  
                  \end{matrix}
         \end{align*}
         The second term vanishes by sending $k\to\infty$, by weak convergence
         of the independent percolation model. The first and third term vanish
         by sending $\epsilon,\eta\to 0$ by the dominated convergence theorem.
      \end{proof}
      
      Let $\delta_k,\lambda_k$ be a sequence such that there exists a non-negative, bounded function $\delta: \bbZ^d \to \bbR$ and  $\lambda \geq 0$  such that 
    \[ \|(\delta_k,\lambda_k) - (\delta,\lambda)\|_\ell \to 0,\]  
    (where $\|\cdot\|_\ell$ is defined in (\ref{localmet})).
      \begin{lemma}\label{lxtconv}
         Let $q \geq 1$ and let $\bfb = \bff,\bfp,\bfw$, then $ \bfQ_{\delta_k;\lambda_k}^{(q)|\bfb}\to \bfQ_{\delta;\lambda}^{(q)|\bfb} $ weakly.
            \end{lemma}
       Here these infinite volume measures are limits   guaranteed to exist by Proposition \ref{weaklim}. 
      This extends the conclusion of  Proposition \ref{finitevolume} to the infinite volume $\lxt$.
      \begin{proof} For $U \in \cF_k$ and for any $n > k$ we have,
         by Proposition \ref{finitevolume}, we have  
         \[\bfQ_{\delta_{k};\lambda_k }^{(q)}|_{B(n)}^\bfb(U)\to \csm{\delta}{(q)}|^\bfb_{B(n)}(U).\]
         We use a diagonalization argument.
        For $i \geq 1$ let $m_i^k$ be chosen so that
         \[ \left|\bfQ_{\delta_{m^k_i};\lambda_{m^k_i}}^{(q)}|^\bfb_{B(k)}(U) - \csm{\delta}{(q)}|^\bfb_{B(k)}(U)\right|<i^{-1}. \]
         Given $\ol n \geq k$ and a sequence $(m_i^{\ol n})$, let $(m_i^{\ol n + 1})$ be a subsequence so that 
         \begin{equation}\label{lxtconv1}
          \left|\bfQ_{\delta_{m^{\ol n+1}_i};\lambda_{m^{\ol n+1}_i}}^{(q)}|^\bfb_{B(\ol n+1)}(U) -
		      \csm{\delta}{(q)}|^\bfb_{B(\ol n+1)}(U)\right|<i^{-1}.
         \end{equation}
         Let $\ell(i) = m_i^i$, and $n > k$ and consider the bound,
         \[
            \left|\bfQ_{\delta_{\ell(i)}  ; \lambda_{\ell(i)}   }^{(q)|\bfb}(U) - \csm{\delta}{(q)|\bfb}(U) \right| 
                        \leq \nonumber 
             \begin{matrix}       
          \left|\bfQ_{\delta_{\ell(i)} ;\lambda_{\ell(i)}  }^{(q)|\bfb}(U) 
               -\bfQ_{\delta_{\ell(i)};\lambda_{\ell(i)} }^{(q)}|^\bfb_{B(n)}(U) \right| +\\  
          +  \left|\bfQ_{\delta_{\ell(i)}  ;\lambda_{\ell(i)}}^{(q)}|^\bfb_{B(n)}(U) - \csm{\delta}{(q)}|^\bfb_{B(n)}(U) \right|\ + \\
          +   \left| \bfQ_{\delta;\lambda}^{(q)}|^\bfb_{B(n)}(U)  - \bfQ_{\delta;\lambda}^{(q)|\bfb}(U)  \right|. 
          \end{matrix}
         \]
            The second term is small for all large $n$, for all large $i$  by (\ref{lxtconv1}). Taking $n$ sufficiently large completes the proof by Proposition \ref{weaklim}. 
      \end{proof}

\subsubsection{Application of continuity to communication events}
 
      As claimed above, communication events are events of continuity, which we demonstrate now.
      \begin{proposition}\label{zerobdry} 
                Let $\delta : \bbZ^d\to \bbR$ be a non-negative bounded function and let $\lambda \geq 0 $. Any communication event $ A = \{ W_1 \llra W_2\}$
        for bounded $W_1$ and $W_2$
        is an event of continuity.  
      \end{proposition}
      \begin{proof}
         First let $ q = 1$.
         We define two events $D_i$, $i=1,2$. The first event $D_1$ is the case that along some line $\{s\}\times\bbR$ 
         for $s\in \bbZ^d\cup \bbE^d$ $\omega$ has an accumulation point of cuts or bonds.
         And the second, $D_2$ is that $\omega$ has a bond and cut which coincide,
         that is there is some time $t\in \bbR$ and pair $\{x,y\}$ so that $(x,t)$
		is  a death and $(\{x,y\},t)$ is a bond in $\omega$.
         Observe that, the boundary of $A$ is contained in the union, $D_1\cup D_2$.

           To see that $\bfQ_{\delta_k,\lambda_k}(D_1) = 0 $ notice this follows from the construction of 
            Proposition \ref{tightfamily}, as for any $\epsilon $ the associated $\chi$
            has $D_1 \subset V_\chi^c$ and  $V_\chi $ has meause $1 - \epsilon$ so the union
              (over $\chi $ chosen for $\epsilon > 0$)  $\cup_\chi V_\chi$  
         has measure 1.
        A similar construction spacing bonds and deaths obtains $\bfQ_{\delta_k,\lambda_k}(D_1) = 0 $, 
         for further discussion, see \cite{bez91}.

         To obtain the statement for $q>1$,
        note $\csm{\delta}{(q)}|_{B}$ 
       has Radon Nikodym derivative 
               $\frac{ q^{k_B(\con)}   }
              {\int_\cons q^{k_B(\con')} d\csm{\delta}{ }(\con') }  $ so it
         is absolutely continuous 
         with respect to $\csm{\delta}{ }|_{B}$; it follows that 
         $\csm{\delta}{(q)}|_{B}\left( D_i \right)$ = 0.  
      \end{proof}
      
       Proposition \ref{weakc} now follows directly from Proposition \ref{zerobdry}         and Lemma \ref{lxtconv}.
      
     It follows from Proposition \ref{weaklim}  and the FKG inequality that, for any collection of pairs of bounded sets $W_1^{(j)}, W_2^{(j)}$ for $j = 1,..,n$, so that for $\bfb = \bff,\bfp,\bfw $
 \be\label{fkginfty}
    \bfQ_{\delta;\lambda}^{(q)|\bfb}\left( \bigcap_{j = 1,..,n} \{ W_1^{(j)} \lra W_2^{(j)} \} \right) \geq \prod_{j = 1,..,n}   \bfQ_{\delta;\lambda}^{(q)|\bfb}\left(  W_1^{(j)}\lra W_2^{(j)} \} \right). 
 \ee

    \begin{proof}[proof of Proposition \ref{mreg}]
     For $(h_k,\lambda_k,\theta_k) \to (h,\lambda,\theta)$ in  $(\cC^+(\bfTh)\times \bbR^+ \times \bfTh   ,  \wh \rho)$
     we have,  for each $x\in \bbZ^d$ that
     $h_k(\bfT^x \theta_k) \to h(\bfT^x \theta)$ so that
    $\wh M^{(q)|n}_{h_k,\theta_k;\lambda_k}  \to  \wh M^{(q)|n}_{ h,\theta;\lambda} $ by Propostition \ref{weakc}.
     Thus, as $\wh M^{(q)|n}$ is continuous in $(h,\theta,\lambda)$, in the $d_\infty^\natural$ metric, and  
      $\wh M^{(q)|n}$   is decreasing in $n$,
    it follows that $\wh M^{(q)|n}_{h,\theta;\lambda}$ is upper semi-continuous in $d_\infty^\natural$. The statement $\mu(\wh\cU_0) \in \{0,1\}$ follows from Proposition \ref{regs}. 
    \end{proof}

    \subsection{The 0 - 1 proof}\label{reg2}

   \begin{proposition}\label{regs}
     For any $h$, $\lambda$ the set $\wh \cU_0  $ has full measure or zero measure. 
     If one of the maps $T_1,..,T_d$ is uniquely ergodic, then $\mu(\wh\cU_0 )= 1$  or there is some $\epsilon > 0$ so that 
     $ \wh M_{q,h,\lambda}(\theta )  > \epsilon $ for every $\theta \in \bfTh$.
   \end{proposition}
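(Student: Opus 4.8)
The two assertions both rest on a \emph{spreading estimate}: for every $\theta\in\bfTh$ and $x\in\bbZ^d$,
\[
   \wh M_{q;h,\lambda}(\bfT^x\theta)\ \ge\ c_0(\|x\|_1)\,\wh M_{q;h,\lambda}(\theta),
\]
with $c_0(r)>0$ depending only on $r$, on $K:=\|h\|_\infty$, on $\lambda$ and on $q$ (so, in particular, not on $\theta$). To get this I would first note that the environment seen from $\bfT^x\theta$, namely $\{h(\bfT^{y}\bfT^x\theta)\}_y=\{h(\bfT^{y+x}\theta)\}_y$, is the spatial translate by $x$ of the environment seen from $\theta$, so by the translation covariance of the cluster model $\wh M_{q;h,\lambda}(\bfT^x\theta)$ equals the $\bfQ^{(q)}_{h,\theta;\lambda}$-probability that $(x,0)$ lies in an infinite cluster. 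The FKG inequality (\ref{fkg}) applied to the increasing events $\{(x,0)\llra(0,0)\}$ and $\{(0,0)\llra\infty\}$ then gives $\wh M_{q;h,\lambda}(\bfT^x\theta)\ge\bfQ^{(q)}_{h,\theta;\lambda}\{(x,0)\llra(0,0)\}\cdot\wh M_{q;h,\lambda}(\theta)$, and since $h\le K$ pointwise Proposition \ref{RCMordering} yields $\bfQ^{(q)}_{h,\theta;\lambda}\ge\bfQ^{(1)}_{qK;\lambda/q}$, so the first factor is bounded below by the probability, in independent percolation with death rate $qK$ and bond rate $\lambda/q$, of placing a bond on every edge of a fixed shortest lattice path from $0$ to $x$ inside a common short time interval with no death on the sites of that path; this is a strictly positive constant $c_0(\|x\|_1)$.

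For the first assertion (which is also the last line of Proposition \ref{mreg}, but whose proof is needed again below), the spreading estimate applied with $x$ and with $-x$ shows $\wh M_{q;h,\lambda}(\bfT^x\theta)>0\iff\wh M_{q;h,\lambda}(\theta)>0$, so $\bfTh\setminus\bfTh_0=\{\wh M_{q;h,\lambda}>0\}$ is $\bfT$-invariant; it is Borel because $\wh M_{q;h,\lambda}$ is upper semicontinuous (Proposition \ref{mreg}; indeed it is the decreasing limit of the continuous functions $\wh M^{(n)}_{q;h,\lambda}$). Ergodicity of $\mu$ for the $\bfT$-action then forces $\mu(\bfTh_0)\in\{0,1\}$.

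For the second assertion, assume $T_1$ is uniquely ergodic (the other cases are identical) and $\mu(\bfTh_0)\neq1$, so by the first part $\mu(\bfTh_0)=0$ and $\wh M_{q;h,\lambda}>0$ $\mu$-a.e.; write $\wh M_{q;h,\lambda}$ for the quantity denoted $M_{q,h,\lambda}$ in the statement. The plan is: (i) produce a non-empty \emph{open} set $U_0\subset\bfTh$ and an $\epsilon_0>0$ with $\wh M_{q;h,\lambda}\ge\epsilon_0$ on $U_0$; (ii) use that a uniquely ergodic $T_1$ has a unique minimal set, equal to $\mathrm{supp}\,\mu$, which meets $U_0$ since $\mu(U_0)>0$, hence every forward $T_1$-orbit enters $U_0$ and by compactness $\bigcup_{k=0}^{N_0}T_1^{-k}U_0=\bfTh$ for some finite $N_0$; (iii) for arbitrary $\theta$ pick $k\le N_0$ with $T_1^k\theta\in U_0$ and apply the spreading estimate to $T_1^k\theta$ with shift $-ke_1$ to get
\[
   \wh M_{q;h,\lambda}(\theta)\ \ge\ c_0(k)\,\wh M_{q;h,\lambda}(T_1^k\theta)\ \ge\ \epsilon_0\,\min_{0\le j\le N_0}c_0(j)\ =:\ \epsilon\ >\ 0 .
\]
Since $\theta$ was arbitrary this gives the uniform bound for every $\theta\in\bfTh$; and for $q=2$ the estimates $\wh M_{2;h,\lambda}(\theta)\ge M_{h,\lambda}(\theta)\ge\wh M_{2;h,\lambda}(\theta)\,\|\wh M_{2;h,\lambda}\|_\infty$ from the proof of Theorem \ref{borelfact} transfer it to $M_{h,\lambda}$, which is what Theorem \ref{introtheorem2} needs.

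The main obstacle is step (i): upgrading ``$\wh M_{q;h,\lambda}>0$ $\mu$-a.e.'' to a uniform positive lower bound on some non-empty open set. Upper semicontinuity is the wrong direction, and the continuous finite-volume approximants $\wh M^{(n)}_{q;h,\lambda}$ only \emph{over}estimate $\wh M_{q;h,\lambda}$, so their (easily controlled) superlevel sets do not help directly — soft functional-analytic arguments give only upper bounds on $\wh M_{q;h,\lambda}$, never lower bounds. Closing this gap will require a genuinely probabilistic input: starting from a positive-measure superlevel set $\{\wh M_{q;h,\lambda}\ge 1/m\}$, a coarse-graining/renormalization argument (in the spirit of the construction used for Theorem \ref{hirec}, combined with the weak continuity of the cluster measures in the environment proved in Section \ref{regq}) should show that a renormalized percolation of good density occurs, uniformly over a whole open region, which is enough to percolate from each cell and hence furnishes $U_0$. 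Carrying that out — in particular verifying that the renormalized density is high enough and is controlled uniformly in $\theta$ on an open set — is where the real work of the proof lies.
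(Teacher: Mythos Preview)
Your strategy is the paper's: an FKG spreading estimate (the paper builds the connecting event $L_{1,N}$ directly in the random-cluster measure rather than first dominating by independent percolation, but the effect is identical) shows $\{\wh M_{q;h,\lambda}>0\}$ is $\bfT$-invariant, and ergodicity gives the $0$--$1$ law. For the uniquely ergodic clause the paper does \emph{not} look for an open superlevel set; it works with the measurable set $\bfTh_\epsilon=\{\wh M_{q;h,\lambda}>\epsilon\}$ and asserts in a single line that unique ergodicity of $T_i$ yields a uniform $N$ with $T_i^{n}\theta\in\bfTh_\epsilon$ for some $0<n\le N$ and every $\theta$, then spreads back via the same FKG bound you use.

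You are right to isolate this step as the crux. Uniform hitting times follow from minimality and compactness only when the target contains a nonempty open set, and unique ergodicity gives uniform Birkhoff convergence only for continuous test functions, not for $\chi_{\bfTh_\epsilon}$; upper semicontinuity of $\wh M$ is, as you say, the wrong direction. The paper supplies no further argument at this point --- it simply asserts the bounded hitting time --- so there is no elementary trick you overlooked. Your diagnosis is more careful than the paper's own proof, and your proposed coarse-graining fix is plausible (if heavy); but you should be aware that the paper itself does not close this gap either.
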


     \begin{proof}
  
 Suppose $\{\theta\in \bfTh:\wh M_{h,\theta;\lambda} >0\}$ has positive ergodic measure $\mu$. As $\bfT$ is ergodic, for any $\theta_0$ there is some $ x$ so that $\wh M_{h,\bfT^x\theta_0;\lambda} > 0 $. Let $P_x =\{0 = w_0,w_1,..,x= w_n\}$ be a path of nearest neighbor steps in $\bbZ^d$ from $0$ to $x$. Let $D_{x}$ be the event that there are no deaths on $P_{x} \times [0,1]$ and there is at least one bond on each nearest neighbor step $\{w_{i-1}, w_i\}\times [0,1]$ for $i = 1,..,n$. Clearly $\bfQ^{(q)}_{h,\theta;\lambda}( D_x) > 0$ so, by the FKG inequality,
 \[ \wh M_{h,\bfT^x\theta_0;\lambda}^{(2)} \geq 
      \bfQ^{(q)}_{h,\theta;\lambda} (D_x) \wh M_{h,\theta_0;\lambda}^{(2)}>0.    \]

    Finally, suppose $T_i$ is uniquely ergodic. Let $\wh \cU_\epsilon = \{ \theta :\wh M^{(q)}_{h,\theta;\lambda} > \epsilon\}$,  then if $ \mu( \wh \cU_\epsilon ) > 0$ there is some $N$ so that for any $\theta$  there is some $m$ so that $0 < m \leq N $ and 
     $ T_i^{m }\theta  \in \wh \cU_\epsilon$.  
    We therefore have for all $\theta$,
    \[    \wh M^{(q)}_{h,\theta;\lambda}   \geq 
    \epsilon
     \bfQ^{(q)}_{h,\theta;\lambda}( D_{Ne_i}  ).
                     \]
         where $e_i\in \bbR^d$ is the vector with 1 at index $i$ and zeros at all other indices. 
      \end{proof}

   \subsection{The $G_\delta$ construction}\label{Borel}

      In this section we will prove Propostion \ref{gdeltatop}.
   
      Let $X$ be a space with a Borel topology and probability measure $\nu$. Let $Y$ be a space with  a Borel topology.
     Let $V$ be a map from $X $ to the Borel sets of  $Y$,
     and let $W$ be an `inverse' i.e. $W_y = \{ x\in X: y \in V_x \}$.
     Let $J_\eta = \{y : \nu(W_y) > \eta\}$.
         \begin{lemma}\label{sample2measure}
     If $V_x$ is open for all $x \in X$ then $J_\eta $ is open for all $\eta >0$.
    \end{lemma}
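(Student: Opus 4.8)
The plan is to prove $J_\eta$ is open by showing every point of $J_\eta$ is contained in an open neighborhood lying in $J_\eta$. Fix $y_0 \in J_\eta$, so $\nu(W_{y_0}) > \eta$. The goal is to find an open $O \ni y_0$ in $Y$ such that every $y \in O$ satisfies $\nu(W_y) > \eta$, which will exhibit $J_\eta$ as a union of open sets.

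The key observation is that the hypothesis "$V_x$ is open for all $x$" gives, for each $x \in W_{y_0}$ (i.e.\ each $x$ with $y_0 \in V_x$), an open neighborhood of $y_0$ contained in $V_x$; but these neighborhoods vary with $x$ and may shrink to nothing, so one cannot simply intersect them. Instead I would argue at the level of sets: for any open $O \ni y_0$, note that $y_0 \in V_x$ and $O \cap V_x \neq \emptyset$ is automatic, but what we actually need is the monotonicity-type relation between $W_{y_0}$ and $\bigcap_{y \in O} W_y$. Write $W^O := \bigcap_{y\in O} W_y = \{x : O \subseteq V_x\}$. Then $x \in W^O$ whenever $V_x$ is an open set containing $y_0$ and $O \subseteq V_x$. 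The crucial point is that $W_{y_0} = \bigcup_{O \ni y_0 \text{ open}} W^O$: indeed if $y_0 \in V_x$ and $V_x$ is open, then $O := V_x$ itself is an open neighborhood of $y_0$ with $O \subseteq V_x$, so $x \in W^O$; conversely $W^O \subseteq W_{y_0}$ always. Moreover this union is directed (if $O_1, O_2 \ni y_0$ then $O_1 \cap O_2 \ni y_0$ and $W^{O_1 \cap O_2} \supseteq W^{O_1} \cup W^{O_2}$), so by continuity of the measure from below along a directed family — or by reducing to a countable cofinal subfamily if $Y$ is, say, second countable or metrizable, which holds in the intended application where $Y = \cS_\cF$ with $d_\tau$ — we get $\nu(W_{y_0}) = \sup_{O \ni y_0} \nu(W^O)$.

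Therefore, since $\nu(W_{y_0}) > \eta$, there exists an open neighborhood $O$ of $y_0$ with $\nu(W^O) > \eta$. But $W^O \subseteq W_y$ for every $y \in O$, hence $\nu(W_y) \geq \nu(W^O) > \eta$ for all $y \in O$, i.e.\ $O \subseteq J_\eta$. Since $y_0 \in J_\eta$ was arbitrary, $J_\eta$ is open.

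The main obstacle is the passage from the directed supremum to an honest limit along which $\nu$ is continuous: measures are only automatically continuous along countable increasing sequences, so one must either invoke a metrizability/second-countability hypothesis on $Y$ to extract a countable cofinal family of neighborhoods (which is exactly the situation in Proposition~\ref{gdeltatop}, where $Y$ carries the metric $d_\tau$), or phrase the argument so that only the single inequality $\nu(W^O) > \eta$ for one suitably small $O$ is needed — which is all we use — and deduce that from the fact that $W_{y_0}$ is the increasing union of the $W^{O_k}$ over a countable neighborhood basis $\{O_k\}$ of $y_0$. I would write it the latter way to keep the lemma's hypotheses minimal, noting that in the application $Y$ is metric so a countable neighborhood basis is available.
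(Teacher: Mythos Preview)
Your proof is correct. The paper takes the sequential dual route: rather than exhibiting an open neighborhood inside $J_\eta$, it shows the complement $J_\eta^c$ is (sequentially) closed. Given $y_i \to y$ with each $y_i \in J_\eta^c$, one observes that for any $x \in W_y$ the set $V_x$ is an open neighborhood of $y$, so $y_i \in V_x$ eventually, i.e.\ $x \in \liminf_i W_{y_i}$; thus $W_y \subset \liminf_i W_{y_i}$, and Fatou's lemma for sets gives $\nu(W_y) \leq \liminf_i \nu(W_{y_i}) \leq \eta$. Both arguments are really the same lower-semicontinuity statement for $y \mapsto \nu(W_y) = \int \mathbf{1}_{V_x}(y)\, d\nu(x)$, and both implicitly use first-countability of $Y$ --- the paper to pass from sequentially closed to closed, you to extract a countable cofinal neighborhood basis --- which is available in the application since $Y = (\cS_\cF, d_\tau)$ is metric. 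The paper's version is a couple of lines; yours is longer but has the merit of making the countability hypothesis explicit rather than hiding it in the word ``closed''.
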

    \begin{proof}
      Suppose $y_i \in J^c_\eta $ and $y_i \to y$.
      For any  $x$ so that $y\in V_x$, as $V_x$ is open and $y_i \to y$, we have $y_i \in V_x$ for all large $i$. Thus, $x \in W_{y_i}$ for all large $i$, as this holds for all $x\in W_y$, it implies $\liminf W_{y_i} \supset W_y $.
        But for all $i$, $y_i \in J_\eta^c$ so $ \nu (W_y )\leq \nu (\liminf W_{y_i} ) \leq \eta $ and therefore $ y \in J_\eta^c$.  
    \end{proof}

 We apply this lemma with $X=\bfTh $ and $Y = \cC^+_{fin}(\bfTh)$. Instead of fixing the scaling of the sampling function here, it is simpler to work with fixed $\lambda = 1$. Before proceeding to the proof of Proposition \ref{gdeltatop}, we show the inclusion map $\iota$ is open.

     Observe that the identity map $\iota  :  ( \cC^+(\bfTh),d_\infty ) \to ( \cC^+(\bfTh), d_\tau )$ is an open continuous map. Continuity is clear, to see $\iota $ is open, 
    we will show the image of the set   $ B_\infty(h,\epsilon) =\{h' \in  \cC^+(\bfTh) : d_\infty(h' , h) < \epsilon  \}$
    under $\iota$   is open. 
     If  $\epsilon  >  \|h\|_\infty  $ the image of  $ B_\infty(h,\epsilon) $ is all normalized functions, i.e. $ \tau ( B_\infty(h,\epsilon) ) = \tau (  \cC^+(\bfTh) ) $.
     Thus, it is sufficient to consider $\epsilon \leq \|h\|_\infty$. 
     
     Now consider $f \in \tau( B_\infty(h,\epsilon) )$ and let $h' \in B_{\infty}(h,\epsilon)$ so that $\tau (h') = f$. Let us write $g = h' -  h$ where $\|g\|_\infty < \epsilon$. Now suppose $\delta <   \frac{\epsilon - \|g\|_\infty }{ \|h\|_\infty + \|g\|_\infty}$ and suppose $h''\in \cS_\cF$ so that $d_\tau (h',h'') < \delta$. Let $h^* = \|h'\|_\infty\tau(h'')$
    then we have 
     \[\|h^* - h\|_\infty  = \|g + \|h'\|_\infty   \tau(h'') - h'\|_\infty      \leq  \|g\|_\infty + \|h'\|_\infty |\tau(h'') - \tau(h')| < \epsilon.   \]
   Thus $B_\tau(h',\delta) = \{h'' \in  \cC^+ (\bfTh) :
    d_\tau(h' , h'') < \delta  \} 
    \subset \tau (B_\infty(h,\epsilon ))$, from which we have that $\iota $ is an open mapping.

     \begin{proof}[proof of Proposition \ref{gdeltatop}]    
              Let us fix $\lambda = 1$.
         Let $V_\theta^{(\epsilon)} = \{ h \in  \cC^+_{fin}(\bfTh) : \wh M^{(q)}_{h,\theta;\lambda} < \epsilon\}$, 
        as  $\wh M^{(q)}$ is upper semi continuous in $d_\infty $, we have  $V_\theta^{(\epsilon)}$ is open.
              Let $W_h^{(\epsilon)} = \{\theta \in \bfTh: h \in V_\theta^{(\epsilon)}\}$
              then, $J_{\eta}^{(\epsilon)} =  \{ h \in  \cC_{fin}^+(\bfTh) :  W_h^{(\epsilon)}    >  \eta \}$ is open by Proposition \ref{sample2measure}.
              As discussed above,   the inclusion $(\cC^+_{fin}(\bfTh), d_\infty) \to (\cC_{fin}^+(\bfTh), d_\tau)$ maps open sets to open sets so $J_{\eta}^{(\epsilon)}  $ is open in $(\cC_{fin}^+(\bfTh), d_\tau)$, moreover, $ \wh J_{\eta}^{(\epsilon)} =  \tau^{-1}\tau J_{\eta}^{(\epsilon)} $ is open in $(\cC_{fin}^+(\bfTh), d_\tau)$.
            Let $\epsilon_i \searrow 0 ,\eta_i \nearrow 1$ then    $ \cap_i  \wh J_{\eta_i}^{(\epsilon_i)} $ is a $G_\delta$ in $(\cC_{fin}^+(\bfTh), d_\tau)$.
             Finally   $\cap_i  \wh J_{\eta_i}^{(\epsilon_i)}  $  is the set of $h$ 
             so that $\wh M_{q,h,1}(\theta) = 0$ for $\mu$ almost every  $\theta$.
             To check that  $\cap_i  \wh J_{\eta_i}^{(\epsilon_i)} = \wh F_{weak} $, consider $g \in \wh F_{weak}$, let $\lambda > 0$ be small enough that $\wh M^{(q)}_{g,\lambda}(\theta) = 0$ for $\mu$ almost every  $\theta$.
             As $\wh M^{(q)}$ is invariant under scaling, $\wh M^{(q)}_{g/\lambda,1}(\theta) = 0$ for $\mu$ almost every  $\theta$,
          so $g/\lambda \in J_{\eta_i}^{(\epsilon_i)}$ for all $i$.   But $\tau(g/\lambda) = \tau(g)$ so $g \in \cap_i  \wh J_{\eta_i}^{(\epsilon_i)} $.

      \end{proof}

         To prove the remark,
         let $\wh V_\theta^{(\epsilon)} = \{(h,\lambda) \in \cC_{fin}^+(\bfTh) \times \bbR^+ : \wh M_{q,h,\lambda}(\theta)  < \epsilon\}$,  again this set is open.
           The rest of the proof is similar to the first part by setting 
              $\wh W_{h,\lambda}^{(\epsilon)} = \{\theta \in \bfTh: (h,\lambda) \in \wh V_\theta^{(\epsilon)}\}$
              and $\wh J_{\eta}^{(\epsilon)} =  \{ (h,\lambda) \in  \cC^+_{fin}(\bfTh)  :\wh  W_h^{(\epsilon)}    >  \eta \}$.

% mpletes the proof for 
%             \qed

  \subsection{Proof of Theorem \ref{introtheorem}} \label{Secmainproof}
 
   Let us begin by comparing the magnetism parameter $M_{\cdot,\cdot;\cdot}$ to the percolation parameter $\wh M^{(2)}_{\cdot,\cdot;\cdot}$.

   \begin{proposition} \label{mmhat compare}
      Let $h \in \cC^+_{fin}(\bfTh)$ be fixed. For any $\theta \in \bfTh$, and $\lambda > 0$, 
       \be\label{mmh}
       \wh M^{(2)}_{h,\theta;\lambda} \geq M^{(2)}_{h,\theta;\lambda}   \geq 
               \wh M^{(2)}_{h,\theta;\lambda}   \| \wh M^{(2)}_{h,\cdot;\lambda} \|_\infty.  
               \ee
     Moreover, $\mu(\cU_0) \in \{0,1\}$, and, if some $T_i$ is uniquely ergodic, then $\mu(\cU_0(h,\lambda)) = 1$ or there is some $\epsilon > 0 $ so that, for all $\theta\in \bfTh$, $ M^{(2)}_{h,\theta;\lambda}  > \epsilon $.
   \end{proposition}
   It follows immediately from the first claim that     $F_{\sharp} = \wh F_{\sharp} $ for $\sharp = \{strong,weak\}$.
      Theorem \ref{introtheorem2} follows immediately from the second claim.

     \begin{proof}
        From (\ref{qim rcm rel}) we have $\wh M^{(2)}_{h,\theta;\lambda} \geq M_{h,\theta;\lambda} $. On the other hand, with  $\bfQ^{(q)}_{h,\theta,\lambda}$  probability one there exists at most 1 infinite cluster, this is a standard fact in many percolation models, 
       for the result in the context of iid $\delta$ see \cite{AizKleinNewman}, a proof for ergodic $\delta$ is similar.
         Thus, by the FKG inequality, 
           \[\bfQ^{(q)}_{h,\theta,\lambda}\{(x,0) \lra  (y,0)\} \geq \wh M^{(2)}_{h,\lambda}(\bfT^x\theta) \wh M^{(2)}_{h,\lambda}(\bfT^y\theta).  \]
         But, again by (\ref{qim rcm rel}), the left hand side is $\langle \sigma_x^{(3)} \sigma_y^{(3)} \rangle$.
            Thus, by choosing a sequence of $y_i$ moving to infinity so that
           $\wh M^{(2)}_{h,\bfT^{y_i}\theta;\lambda} \to \sup_{\theta}\wh M^{(2)}_{h,\theta;\lambda} $,   
               we have (\ref{mmh}).
 
 The second claim follows by combining the equivalence of $M$ and $\wh M$ with the result of proposition \ref{regs}.
 \end{proof}

      \begin{proof}[Proof of Theorem \ref{introtheorem}.]
         Given a sampling function $h\in\cC^+_{fin}(\bfTh) $ we find elements of $F_{strong}$ and $F_{weak}$, close to $h$. Let us write $A  = h^{-1} (0)$, the nonempty and finite zero set of $h$, moreover,  without loss of generality we may assume $\|h\|_\infty = 1$. For any $\epsilon > 0$ the constructed  function $h_\epsilon$ respectively in  $F_{strong} $ and $F_{weak}$, such that $d_\infty(h, h_\epsilon) < \epsilon$, will obey $\|h_\epsilon\| = 1$ so that $d_\tau(h,h_\epsilon) = d_\infty(h, h_\epsilon) < \epsilon$.

   First we will find a function in $F_{strong}$ close to $h$. Let $\theta_0 \in A$ and let $\eta > 0 $ be so that $r(\theta_0,\theta) < \eta$ implies $h(\theta) < \epsilon$. Let $f_{\theta_0}$ be the function declared to exist in Theorem \ref{strongfun} so that $f_{\theta_0}(\theta) = 1$ for $r(\theta,\theta_0) > \eta$. Set $h' = f_{\theta_0}h$, then $d_\tau(h',h) < \epsilon$, and by monotonicity $h' \in F_{strong}$.
        
       Now we prove the density  of $F_{weak} $, again let $h \in \cC_{fin,1}^+(\bfTh)$ and write $A = h^{-1}(0)$. Let $\nu $ and $\chi$ satisfy  $1 + \chi < \nu(1 + 1/d)$ and let $ m < \infty$ then let $\psi_A$ be the sampling function introduced in Proposition \ref{fasterthanpolynomial}  so that the pair $(\psi_A, \lambda)$ is $\bfC_{m,\nu}$-localized. By the scaling property, $(t\psi_A,t\lambda)$ is also $\bfC_{m,\mu}$-localized. By monotonicity, $ ( h +  t\psi_A, t\lambda )$ is also $\bfC_{m,\nu}$-localized. Let $h_t = h + t\psi_A$, then  $ \|h - h_t\|_\infty \to 0 $ as $t \to 0$. It follows that  $d_\tau(h,h_t) \to 0 $ as $t \to 0$ which establishes  density of $F_{weak} $ in $(\cC^+_{fin,1}, d_\tau)$.
 
 Proposition \ref{mmhat compare} shows that $F_{weak} = \wh F_{weak}$ and Proposition \ref{gdeltatop} states $\wh F_{weak}$ is a $G_\delta$. Thus we have the $G_\delta$ claim of the Theorem. Finally we prove the claim of the partition. Observe that
 \[ F_{strong }  = \cap_{\lambda >0} \{h: \mu( \{\theta\in \bfTh: M_{h,\theta;\lambda} >0\})=1 \} \] 
 If there is some $\lambda > 0$ so that 
 \[\mu( \{\theta\in \bfTh: M_{h,\theta;\lambda} >0\}) < 1 \]
 then, by the second part of Proposition \ref{mmhat compare}, $ \mu( \{\theta\in \bfTh: M_{h,\theta;\lambda} >0\})=0  $, so $h \in F_{weak}$. 
      \end{proof}

          \begin{proof}[Proof of Theorem \ref{fasterthanpolynomial}]
          Let $\chi>0$ be chosen so that $ 1 + \chi < \nu(1 + 1/d)$. For all $k$, let $\eta_k$ be defined as 
  \[\eta_k =   \inf \{ \eta : K_1(\eta) < k  \}.  \]    
  Define $\wt K $ on $(0,1)$ as a continuous function by setting $\wt K(\eta )  = 1$ and for all $k > k_0$, for some $k_0$ so that $K_1(\eta) > k_0$ set $\wt K(\eta_k) = k$, finally, linearly interpolating between these values.    
           Let $g_A(\theta) = 1$ for all $\theta$ so that $r(\theta,A) > \eta$. For $\theta_0 \in A$ and $r(\theta_0,\theta) < \eta $,
          \[ g_A(\theta) = e^{ 1- [\wt K(r(\theta_0,\theta)) ]^\chi} \]
     By construction, $g_A^{-1}(0) = A$ and $ \| g_A  \|_\infty  =  1$. Moreover, $g_A $ satisfies the conditions of Theorem \ref{lowRecurrence}. For $\theta$ so that $\epsilon< r(\theta,A) < \eta $, we have 
     \[ \log(1 + |\log g_A(\theta)|) 
     = \chi \log \wt K(r(\theta,\theta_0)) 
     \leq \chi \log (1 +  K_1(r(\theta,\theta_0))) 
     \leq \chi \log (1 +  K_1(\epsilon ))   \]
          which is sufficient to  verify (\ref{psilimit}) and obtain the proof.
      \end{proof}

       \begin{proof}[Proof of Theorem \ref{strongfun}]
         Let $\eta > 0$ and let $\eta > \eta_0 > \eta_1 >   \eta_2 > \cdots$, so that  $\eta_i \searrow 0$ and for all $i$, the $\eta_i$ ball centered at $\theta_0$ has $\mu$ measure 0 boundary.

         Now, as  $\spc$ is compact for every $i$
         there is some $L_i$ so that 
         \[ \spc \subset \bigcup_{x\in [0,L_i]^d} \ag^x B(\eta_{i+1};\theta_0) ,\]
           note the statement holds if $\spc$ is not compact but some   $T_i$ is uniquely ergodic with respect to $\mu$.
         For $a>1$ we construct the following function.
         Let $v(x) = 1$
         on $1 \geq x > \eta$ and 
         \[  v (x) =   e^{ \left(   \frac{ x - \eta_0}{\eta - \eta_0} +
              \frac{\eta - x}{\eta-\eta_0} L_0   \right)^a }\]
         on $\eta\geq x > \eta_0$ and
         \[  v(x) =  e^{\left(  \frac{ x - \eta_{i+1}}{\eta_{i} - \eta_{i+1} } L_i  +
              \frac{\eta_i - x}{\eta_i-\eta_{i+1}}   L_{i+1}  \right)^a  } \]
         on $\eta_i \geq x > \eta_{i+1}$, for $i \geq 0$.
         And let
         \[ f_{\theta_0}(\theta)=\frac{1}{v(r(\theta,\theta_0))} \]
         which is the desired function. 
         
         To show $f_{\theta_0}$ satisfies (\ref{psilimit2}), let $0 <\epsilon < \eta_0$. Suppose $\eta_i \leq \epsilon < \eta_{i+1} $, then for $r(\theta,\theta_0) < \epsilon$,
         \[ \log( 1 + | \log (f_{\theta_0})| )> a \log L_i.  \]
         By construction, $K_2(\theta_0,\epsilon ) \leq  K_2(\theta_0, \eta_{i+1}) \leq L_i$, thus, $f_{\theta_0}$ satisfies Theorem \ref{hirec}. 
          \end{proof}

   \section{Percolation arguments}\label{analysis}

     We will cover Theorems \ref{rotationtheorem}, \ref{lowRecurrence}, and \ref{hirec} in this section. We begin in Section \ref{multscale} with the introduction of the multiscale analysis statements which provide conditions on the family of environments. In Section \ref{low} we show the environment process and sampling function satisfying (\ref{psilimit}) is sufficient to apply the multscale analysis, which proves Theorem \ref{lowRecurrence}. In Section \ref{ooc} we prove Theorem \ref{hirec}, by coupling the percolation in $\bbZ^d \times \bbR$ to a percolation in $\bbZ^{d+1}$. We complete this section with the proof of Theorem \ref{rotationtheorem}
 in Section \ref{rotations}.

   For spatial length scale $L$ we associate a `time' scale $T(L) = \exp(L^\tau)$, where we will specify $\tau$ depending on the argument. Let us write a cylinder centered at $(x,t)$ as
       \[ B_L(x,t) = \Lambda_L(x)\times\left[t-T(L),t+T(L)\right] \]   
       The environment
      is invariant in the time dimension so the choice of $t$ above will often not affect the discussion,
      thus we fix the notation $B_L(x) = B_L(x,0)$.

      \subsection{Multiscale Analysis}\label{multiscale}
      By the ordering Lemma \ref{RCMordering} we need only demonstrate localization
      for the product measure $q = 1$ to infer similar results for $q > 1$. 
       Thus for this section we fix $q  = 1$.
      \begin{definition}
         Let $m>0$ and $L\in{\bf Z}^+$. 
         For a fixed environment $\delta,\lambda$; a site $x\in\zd$ is $(m,L)$-regular if
         \[ \bfQ_{\delta;\lambda}\left(x \lra \partial B_L(x)\right)\leq \exp\{-mL\} \] 
         otherwise it is $(m,L)$-singular.
         A set $A\subset\zd$ is $(m,L)$-regular if every $y\in A$ is
         $(m,L)$-regular. Otherwise it is $(m,L)$-singular.
      \end{definition}
      \begin{definition}
         A site $x\in \zd$ is $\epsilon$-resonant if $\delta(x)<\epsilon$. 
         A set $A\subset\zd$ is $\epsilon$-resonant if there exists 
         $x\in A$ which is $\epsilon$-resonant.
      \end{definition}
      \begin{definition}
         The pair $(\epsilon,L)$ is $m$-simple if $x\in\zd$ is 
         $(m,L)$-singular implies $\Lambda_L(x)$ 
         is $\epsilon$-resonant.
      \end{definition} 

   From the assumptions in Proposition \ref{lowRecurrence} we have $\nu( d+1) -  (1+ \chi) d > 0$. We will introduce several more parameters for the multiscale analysis.
   Let $\alpha > d $ be a parameter   satisfying,
   \[          \alpha  >  \frac{(\chi + \nu) d}{\nu(d+1) -(1+\chi) d} .   \]
   It follows that 
   \[      0 <  \chi(\alpha + 1) d < \alpha\nu - (\nu  + \alpha - \alpha \nu)d    \]
   Let $\gamma$ and $\kappa$ be parameters so that $\alpha \nu /d  -   \chi(\alpha + 1)> \kappa > \nu + \alpha - \alpha \nu$ and
   \[   \chi(\alpha + 1) d <\gamma  < \alpha\nu - \kappa d    \]
    Finally, let $\tau$ satisfy $\nu < \tau < \kappa - \alpha(1 - \nu)$. In the statements below, $R$ is a fixed positive integer. In practice, for dynamically defined environments, $R = |h^{-1}(0)|$, which we will justify below.

     As the name suggests, multiscale analysis involvles induction of regularity on a sequence of lenght scales.
 
  The following proposition, which is essentially the statements of Sublemmas 4.2 and 4.3 of \cite{klein94}, states regularity on scale $L$ may be upgraded to regularity on scale $L^\alpha$
      \begin{proposition}\label{msa}
     Suppose
     $ \Lambda_{L^{\alpha}}(x) \cap (\cup_{i= 1}^R  \Lambda_{L^\kappa}(y_i)  )$ for a sequence $y_1,..,y_R\in\bbZ^d$  is $e^{  -  L^{\gamma} }$ nonresonant. Suppose every  $y \in \Lambda_{L^\alpha}(x)\less \cup_{i=1}^R \Lambda_{2L+1}(y_i)$ is $(m,L)$-regular.
         Then $x$ is $(m-L^{-\tau},L^\alpha)$ regular.
      \end{proposition}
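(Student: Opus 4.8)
The plan is to bound the connection probability $\bfQ_{\delta;\lambda}(x\llra\partial B_{L^\alpha}(x))$ by summing over the possible ``last'' paths that escape the box, using the regularity of the bulk sites to pay an exponential cost $e^{-mL}$ for each scale-$L$ step, and using the non-resonance of $\tilde\Lambda$ together with Proposition \ref{RCMordering} (comparison to the product/constant measure) to control the contribution of the resonant neighbourhoods. First I would set up the geometry: cover $\Lambda_{L^\alpha}(x)$ by scale-$L$ boxes and observe that any path from $x$ to $\partial B_{L^\alpha}(x)$ must, after leaving the region $\cup_i\Lambda_{2L+1}(y_i)$, pass through a sequence of $(m,L)$-regular sites; a standard chaining/coarse-graining argument (as in Sublemma 4.2 of \cite{klein94}) then gives that the cost of traversing the ``good'' part of the annulus is at least $\exp\{-m(L^\alpha - CR L^\kappa)\}$ for an appropriate constant, i.e. essentially $e^{-mL^\alpha}$ up to the correction coming from the $R$ excluded regions of radius $\sim L^\kappa$.

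Next I would estimate the contribution of the resonant regions $\Lambda_{L^\kappa}(y_i)$. Inside each such region the cut rate is bounded below only by $e^{-L^\gamma}$ (non-resonance at level $e^{-L^\gamma}$), so by the monotonicity of Proposition \ref{RCMordering} the restriction of the measure there is dominated by the constant-environment continuum random cluster measure with $\delta\equiv e^{-L^\gamma}$ and the given $\lambda$. For such a nearly-clean box one has, for $\lambda$ small, an a priori bound on the probability of crossing a box of radius $L^\kappa$ in the time-slab of height $T(L^\alpha)=\exp\{L^{\alpha\nu}\}$ that decays like $\exp\{-c\, e^{-L^\gamma} T(L^\alpha)\}$ or, more usefully, that keeps the number of long vertical excursions under control; combined with the parameter inequalities $\gamma<\alpha\nu-\kappa d$ and $\nu<\tau<\kappa-\alpha(1-\nu)$, the total loss incurred across all $R$ resonant blocks is at most $\exp\{+L^{\alpha-\tau}\cdot(\text{something})\}$, which is absorbed into the $L^{-\tau}$ weakening of the mass. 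Finally I would assemble the two estimates, FKG (\ref{fkg}) to glue the regular traversals, and the union bound over the $O((L^\alpha)^d)$ choices of escape route, to conclude
\[
   \bfQ_{\delta;\lambda}\left(x\llra\partial B_{L^\alpha}(x)\right)
      \le \exp\left\{-(m - L^{-\tau})L^\alpha\right\},
\]
i.e. $x$ is $(m-L^{-\tau},L^\alpha)$-regular.

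The main obstacle I expect is the bookkeeping in the second step: one must show that the combined effect of $R$ resonant boxes of radius $L^\kappa$, each carrying a cut rate as small as $e^{-L^\gamma}$ and each spanning a time interval of enormous height $\exp\{L^{\alpha\nu}\}$, degrades the exponential rate by no more than $L^{-\tau}$ per unit length. This is exactly where the carefully chosen exponent inequalities ($\xi(\alpha+1)d<\gamma<\alpha\nu-\kappa d$, $\alpha\nu/d-\xi(\alpha+1)>\kappa>\nu+\alpha-\alpha\nu$, $\nu<\tau<\kappa-\alpha(1-\nu)$) have to be used in concert, and matching the continuum estimates to the discrete induction scheme of \cite{klein94} requires care about how $T(L)$ scales between levels; since the statement explicitly invokes Sublemmas 4.2 and 4.3 of that paper, I would lean on those for the purely combinatorial part and focus the new work on the continuum resonant-box estimate and the verification that the parameter choices make everything close.
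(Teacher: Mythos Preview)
Your sketch is aligned with the intended argument, but note that the paper itself does not supply a proof of this proposition: it simply states that the result ``is essentially the statement of Sublemmas 4.2 and 4.3'' of \cite{klein94} and cites that paper for the induction step. Your outline (chaining through $(m,L)$-regular sites in the bulk, comparing the resonant regions to a constant environment with $\delta\equiv e^{-L^\gamma}$ via Proposition \ref{RCMordering}, and absorbing the loss into the $L^{-\tau}$ mass decrement using the exponent hierarchy) is exactly the structure of Klein's argument, so you are reproducing the cited proof rather than diverging from the paper. One small slip: the good-region traversal probability is \emph{at most} $\exp\{-m(L^\alpha - CRL^\kappa)\}$, not ``at least''; and the resonant-box estimate is better phrased as a bound on the number of vertical escapes of height $T(L)$ in a column of height $T(L^\alpha)$ when the cut rate is $\geq e^{-L^\gamma}$, which is where $\gamma<\alpha\nu-\kappa d$ enters --- but you have correctly flagged this as the delicate point and deferred to \cite{klein94} for it, which is precisely what the paper does.
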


      Let us note that we can initialize the multiscale analysis with any $m_1$ and large $L_1$ by selecting $\lambda>0$ 
     sufficiently small.
     Let $\delta' = \min_{x \in \Lambda_L(0)}  h(\bfT^x \theta)$ then, for $\lambda / \delta'$ sufficiently small,
      \begin{equation} \label{initialscale}
           \bfQ_{h,\theta;\lambda} (0\llra \partial B_L(0) ) \leq \bfQ_{\delta',\lambda} (0\llra \partial B_L(0) )<  e^{- m_1 L_1}.   
    \end{equation}  
   This inequality follows from Corollary 2.2 in \cite{klein94}.

      The following theorem is the standard use of the Borel-Cantelli lemma in 
      multiscale arguments \cite{cam91}, \cite{klein94}. 
      It is also stated as Theorem 2.1 in \cite{jito} in a more general form.
       The sequences $(m_i)$ and $(L_i)$ correspond to the sequences generated by applications of Proposition \ref{msa} and 
          are defined for   initial $m_1$ and $L_1$ 
       and induction $m_{i+1} = m_i - L_i^{-\kappa} $ and $L_{i+1} = L_i^\alpha$.

        For given sampling function $h$ and coupling $\lambda$
      let us write 
         \be   a_{k} = \mu(\left\{\theta:  \textrm{Environment initialized  at }\theta 
            \textrm{  is } \left(m_k,L_k\right)\textrm{-singular}\right\})
        \label{ak}            
              \ee
      If the sequence $a_k$ decays sufficiently fast, we can apply the following theorem, 
      which follows from Theorem 3.3 in \cite{klein94}.
      \begin{theorem}\label{msaBC}
         Fix coupling $\lambda > 0$ and sampling function $h$. Let $p > \alpha d$,
         then if $\limsup  a_k L_k^p  < \infty $ we have that 
         for any  $0<m<m_\infty = \inf_k m_k$, the pair $h,\lambda$ is $C(\nu, m )$-localized.
      \end{theorem}

\subsection{Recurrence}\label{low}

   We carry out the generalization of the arguments in \cite{jito} for
   the specified controlled recurrence models.
    First we state bound for the probability  $\Lambda_L$ is $\epsilon$-resonant, we use the following transversality notation for sampling functions.
    
     For an increasing function $Z:(0,1) \to (0,1)$ so that $\lim_{r \to 0} Z(r) = 0$, we say $h \in \cC_{fin}^+(\bfTh) $ is admitted by $Z$, if
     \[   \lim_{\epsilon \to 0}\left( \inf_{\theta: r(\theta, h^{-1}(0) ) > \epsilon} \frac{Z(h(\theta))}{ r(\theta, h^{-1}(0) )} \right) > 3 . \]
     The choice of the constant $3$ here is somewhat arbitrary, but it makes the proof of Proposition \ref{KacsLemma} and the following proofs more convenient.
     
   \begin{proposition}\label{KacsLemma} 
      Let $(\bfTh, \bfT)$ be an environment process, let
  $ h \in \cC_{fin}^+(\bfTh)$, and let $Z$ be a function admitting $h$. For sufficiently small $\epsilon$,
 \[  \mu\left( \Lambda_L(0) \textnormal{ is } \epsilon \textnormal{-resonant} \right) \leq   \frac{ c L^d   }{ K_1(  Z(\epsilon)) } , \]
 where $0 < c< \infty$ depends only on $d$ and  $|h^{-1}(0)|$. 
   \end{proposition} 
   \begin{proof}
      We assume that $h$ is admitted by $Z$ and $\epsilon $ is sufficiently small that  $r(\theta,h^{-1}(0)) < \epsilon$ implies $Z(h(\theta)) >2 r(\theta, h^{-1}(0))$.
      Let us write $ A =  \{\theta_1,\ldots, \theta_{|A|}\}$, 
         and $U_i = h^{-1}([0,\epsilon))\cap B_{Z(\epsilon)}\left(\theta_i\right)  $
        so that
      \begin{equation}  \label{partiKac}  h^{-1}([0,\epsilon) )= \bigcup_{1\leq i\leq |F|} U_i. \end{equation}
      We consider each $U_i$ separately, and we will show $K_1(Z(\epsilon))$ is a lower bound on return times to the set $U_i$. Indeed, suppose there is some $\theta$, so that $x,y\in\bbZ^d$ are such that $\bfT^x\theta,\bfT^y\theta\in U_i $. Then, for some $0 < c < \infty$
      \[r(\bfT^x\theta,\bfT^y\theta) \leq  r(\bfT^x\theta, \theta_0) + r( \theta_0,\bfT^y\theta) \leq  \tfrac12 Z(h(\bfT^x\theta)) + \tfrac12 Z(h(\bfT^y\theta)) \leq   Z(\epsilon). \] 
      Where the last inequality holds because $h(\bfT^x\theta),h(\bfT^y\theta) < \epsilon$.
      By definition of $K_1$,
      this implies that $K_1(Z(\epsilon)) \leq |x-y| $, ie $K_1\circ Z$ is a lower bound on return times.
      On the other hand, the return times are related to the size
       of the set in the ergodic measure by Kac's lemma, (see e.g. \cite{cfs1982}) thus,
      \begin{equation}\label{Kac}
    K_1(  Z(\epsilon)) \leq   \bbE\{\textrm{Return time to } U_i  \} 
          = \frac{1}{\mu\left(A_i\right) }.
      \end{equation} 
     Thus, combining (\ref{partiKac}) and (\ref{Kac}),
      \[\mu(h^{-1}([0,\epsilon))) \leq 
             \sum_i \mu\left(U_i\right)  \leq \frac{|A|}{ K_1(  Z(\epsilon))}. \]
        Finally, the conclusion follows from the fact that the probability of a set $\Lambda\subset \bbZ^d$ being $\epsilon$-resonant is
        bounded by $|\Lambda|\cdot \mu(h^{-1}([0,\epsilon))).$  
   \end{proof}
   We show that simplicity at scale $L$ implies regularity in the bulk at scale $L^\alpha$. 
   Essentially, we show that at the chosen sequence of scales, 
    at most one resonance occurs per $L_i$ box per  zero of $h$. 
   \begin{proposition}\label{simple2regular}  
      Let $(\bfTh, \bfT)$ be an environment process.
      Suppose $h\in \cC_{fin}^+(\bfTh)$ is admitted by a function $Z$ such that, for small enough $\epsilon > 0$, 
      \[     
        K_1( Z(\epsilon)) >  \log^{1/\chi} \left(\epsilon^{-1}\right)
      \]
        for some $\chi < 1/d$ and let $A = h^{-1}(0)$. Moreover, suppose $(\exp\{-L^\gamma\},L)$ is $m$-simple and $L$ is large. Then for any $x\in\zd$
      there exists $y_i\in\Lambda_{L^{\alpha}}$, for $i=1,..,|A|$, so that  $\Lambda_{L^\alpha}(x)\less \cup_{i=1}^{|A|} \Lambda_{L}(y_i)$ is $(m,L)$-regular.
   \end{proposition}
   \begin{proof}
      Fix an initial phase $\theta$ and label the points of $A$ by $A = \{\theta_1,\ldots,\theta_{|A|}\}$. If $x\in\bbZ^d$ is $\epsilon$-resonant then for some $i = 1,..,|A|$, $r(\bfT^x\theta,\theta_i) <\tfrac12 Z(\epsilon)$ for some $i=1,..,|A|$. If $\bfT^x\theta,\bfT^y\theta  \in B_{Z(\epsilon)}(\theta_i)$ then  $|x-y| \geq  K_1(Z(\epsilon))$, as in the proof of Proposition \ref{Kac}.       
      Let $\epsilon = \exp\{-L^\gamma\}$, then from the assumption $h$ we have, for $L$ sufficiently large, $ |x-y| > L^{\gamma/\chi}$, if both $h(\bfT^x\theta),h(\bfT^y\theta) < \epsilon$. 
      Thus for each $\theta_i\in h^{-1}(0)$ there exists at most one $ \exp\{-L^\gamma\} $-resonant
      $y_i\in\Lambda_{L^\alpha}(x)$ . By definition of $m$-simple, the result follows.
   \end{proof}

   \begin{proposition}\label{simple2simple} Let $(\bfTh, \bfT)$ be an environment process.
      Suppose $h\in \cC_{fin}^+(\bfTh)$ is admitted by a function $Z$ such that, for some $ 0 < c < \infty $ and for small enough $\epsilon > 0$, 
      \[     
        K_1( Z(\epsilon)) >  \log^{1/\chi} \left(\epsilon^{-1}\right)
      \]
        for some $\chi < 1/d$. Suppose $\lambda $ is sufficiently small and $(\exp\{-L^\gamma\},L)$ is $m$-simple, then $(\exp\{-L^{\alpha \gamma}\},L^\alpha)$ is $m' = m - L^{-\tau}$ simple.
   \end{proposition}
   \begin{proof} Let  $A = h^{-1}(0)$, by Proposition \ref{simple2regular}, we have that there exists $y_1,\ldots,y_{|A|}\in\Lambda_{L^{\alpha}}$ so that $\Lambda_{L^\alpha}(x)\less \cup_{i=1}^{|A|} \Lambda_{L}(y_i)$ is $(m,L)$-regular.
      Now if $x\in\zd$ is so that $\Lambda_{L^{\alpha\gamma}}$ is 
      $\exp\left\{-L^{\alpha\gamma}\right\}$ non-resonant,
      Theorem \ref{msa} implies $x$ is $(m',L^\alpha)$ regular.
   \end{proof}

   \begin{proof}[Proof of Theorem \ref{lowRecurrence}.]   
      To prove the theorem we need to show the hypothesis of theorem \ref{msaBC} holds.
     Let $\gamma,\alpha,\tau,\kappa$  satisfy the assumptions of Proposition \ref{msa} 
    so that $\chi (\alpha + 1) d < \gamma$.  
     Let $p > \alpha d$ be such that $\chi ( p + d) < \gamma $.
 
   By definition of $\phi_h$, $Z$ is admitted by $ \phi^{-1}$, so that by (\ref{psilimit}), for small enough $\epsilon$,
   \be \label{psilim3}
    \frac{\log|\log\epsilon|}{\log K_1 \phi^{-1}_h(\epsilon)  } < \chi.        
   \ee
   Then by Proposition \ref{KacsLemma}, setting $\epsilon = e^{-L^\gamma}$
    \be \label{reslevel}  \mu(\Lambda_L(0) \textnormal{ is } e^{- L^\gamma } \textnormal{-resonant}) \leq  L^{d - \gamma/\chi}.  \ee
 We will require $L_0$ to be large enough that   for all $L \geq  L_0$ (\ref{reslevel}) holds.

     Let $m $ be as defined in Theorem  \ref{lowRecurrence}, let $m_0 = m+ 1$. For chosen $L_0$ we have a sequence of scales $ L_{k+1} = L_k^\alpha$, $\epsilon_k = \exp\left\{-L_k^\gamma\right\}$ and $m_{k+1} = m_k - L_k^{-\tau}$. Let $L_0$ be large enough that  $ m <  m_\infty = m_0 - \sum_{i=0}^\infty   L_i^{-\tau}$.

   Let $L_0$ be sufficiently large and        take bond rate $\lambda > 0$ so small that  the uniform environment $\lambda$ and $\delta = \epsilon_0$  has probability of escape 
    \[   \bfQ_{\epsilon_0,\lambda} (0 \lra \partial B_{L_0}(0)  )< e^{-m_0 L_0} .  \] 
    Thus in the disordered environment, by comparison to the homogeneous environment using (\ref{order1}),
     $(\exp(-L_0^\gamma),L_0)$ is $m_0$-simple.

      Apply Proposition \ref{simple2simple}, using (\ref{psilimit}), $Z = \phi^{-1}_h$, and large enough $L_0$, we have for all $k$, $(\epsilon_k,L_k)$ is $m_k$ simple. Now to complete the proof we only need to check that the $a_k$ as defined in (\ref{ak}) decays sufficiently fast. By (\ref{reslevel}), $a_k < L_k^{d-\gamma/\chi}  $ it is indeed true that $\limsup_k a_k L_k^{p} < \infty$, which satisfies the hypothesis of Theorem \ref{msaBC}.
   \end{proof}

%\section{Independent percolation on $\lxt$}

\subsection{Infinite components}\label{ooc}
      In this section we summarize results from \cite{AizKleinNewman} as they apply 
      to our model. 
     First we state the uniqueness of the infinite cluster, and an immediate corollary, 
      a lower bound on the probability two sites communicate.

      \begin{theorem}
         Let $\delta(x) = h(\ag^x\theta)$.
         There exists with probability one either zero or one unbounded components.
         Furthermore, for $x,y\in\bbZ^d$ and any $t,s\in\bbR$, 
         \[ \csm{h,\theta}{}\left( \{(x,t)\lra (y,s)\}\right) \geq M_\lambda( \bfT^x\theta )M_\lambda(\bfT^y\theta). \] 
      \end{theorem}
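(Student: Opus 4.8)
The plan is to get the lower bound from the uniqueness statement by an FKG argument, exactly as in the proof of Theorem \ref{borelfact}, so the substantive content is uniqueness, which I would import from \cite{AizKleinNewman}. First I would record the symmetries of $\csm{h,\theta}{}$ that make everything go. For fixed $(h,\theta,\lambda)$ the measure is invariant under the time translations $(x,t)\mapsto(x,t+c)$, $c\in\bbR$, which act mixingly; and the family $\{\csm{h,\theta'}{}\}_{\theta'\in\bfTh}$ is covariant under lattice shifts, since $h(\bfT^{x+z}\theta)=h(\bfT^x(\bfT^z\theta))$, so the push-forward of $\csm{h,\theta}{}$ by $\tau_z:(x,t)\mapsto(x+z,t)$ is $\csm{h,\bfT^z\theta}{}$. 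Hence the annealed measure $\bbP=\int_{\bfTh}\csm{h,\theta}{}\,d\mu(\theta)$ on $\bfTh\times\bfOm$ is invariant under $(\theta,\omega)\mapsto(\bfT^z\theta,\tau_z\omega)$, and ergodic for it because $\mu$ is $\bfT$-ergodic and the $\bbR$-time translations are mixing on each fibre.

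For uniqueness I would rerun the Burton--Keane argument of \cite{AizKleinNewman}; its two ingredients survive the change of environment. Finite energy: conditioned on the configuration outside a bounded cylinder $B$, the law of $\csm{h,\theta}{}$ inside $B$ is absolutely continuous with respect to the underlying Poisson law with a density whose logarithm is bounded (the Radon--Nikodym factor $q^{k_B}$ has bounded range by the estimate behind (\ref{L1}), and deaths and bonds occur at strictly positive rates), so one may insert or delete finitely many bonds and cuts in $B$ at positive conditional cost. Invariance/ergodicity, from the previous paragraph, makes the number $N$ of unbounded clusters $\bbP$-a.s.\ a constant in $\{0,1,\infty\}$. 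The trifurcation-counting step then excludes $N=\infty$: on $\{N=\infty\}$ finite energy forces a positive density of trifurcation points, while only $O(|\partial B|)$ can be packed into a box $B$, impossible as $B\uparrow\lxt$. This gives at most one unbounded cluster $\csm{h,\theta}{}$-a.s.\ for $\mu$-a.e.\ $\theta$ (the form used elsewhere in the paper), and the same for any $q\geq 1$.

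For the lower bound, fix $x,y\in\zd$, $t,s\in\bbR$, and let $E_X$, $E_Y$ be the events that $(x,t)$, resp.\ $(y,s)$, lies in an unbounded cluster; both are positive. The FKG inequality (\ref{fkg}) gives $\csm{h,\theta}{}(E_X\cap E_Y)\geq\csm{h,\theta}{}(E_X)\,\csm{h,\theta}{}(E_Y)$, and on $E_X\cap E_Y$ uniqueness forces $(x,t)$ and $(y,s)$ into the same unbounded cluster, so $\{(x,t)\llra(y,s)\}\supseteq E_X\cap E_Y$. Finally, by time-invariance $\csm{h,\theta}{}(E_X)$ does not depend on $t$, and by the covariance above it equals $\csm{h,\bfT^x\theta}{}(\{(0,0)\text{ in an unbounded cluster}\})$; since $\{(0,0)\llra\partial B_n\}$ decreases to $\{(0,0)\text{ in an unbounded cluster}\}$, continuity of measure identifies this with $\wh M_{h,\lambda}(\bfT^x\theta)=:M_\lambda(\bfT^x\theta)$, and symmetrically for $y$. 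Chaining the three inequalities yields the claim.

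The FKG step and the implication ``at most one unbounded cluster $\Rightarrow$ the two marked points communicate'' are routine. The main obstacle is checking that nothing in the Burton--Keane argument of \cite{AizKleinNewman} uses the i.i.d.\ structure of $\delta$ beyond finite energy and shift-ergodicity of the annealed law, both of which persist for $\delta(x)=h(\bfT^x\theta)$ with $\mu$ ergodic, so that the proof transfers essentially verbatim; the secondary point is being careful that the quenched uniqueness statement is the ``$\mu$-a.e.\ $\theta$'' one coming from the annealed measure, which is all that the applications (e.g.\ Theorems \ref{borelfact}, \ref{hirec}) actually need.
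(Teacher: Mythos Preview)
Your proposal is correct and follows essentially the same approach as the paper: the paper imports uniqueness from \cite{AizKleinNewman}, remarking only that ``allowing the environment to be chosen ergodically by a sampling function adds no difficulty to the proof,'' and then derives the inequality in one line from FKG plus uniqueness of the unbounded component. Your write-up is in fact more detailed than the paper's own treatment --- you make explicit the annealed ergodicity and the finite-energy verification that the paper leaves implicit --- and your closing caveat that the quenched statement is only for $\mu$-a.e.\ $\theta$ is exactly the form used in the applications.
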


     The uniqueness of the infinite cluster is shown in \cite{AizKleinNewman} for random independent environments. 
      In \cite{AizKleinNewman} the bond rates as well 
      as the death rates are chosen at random which makes establishing uniqueness
      considerably more difficult, however allowing the environment to be chosen 
      ergodically by a sampling function adds no difficulty to the proof.

           For large  $L> 0$ we construct a bond-site percolation model on $\wt{ \mathbb{Z}}^{d+1} $ with a measure $\wt{P}_{p_L,q_L}$ coupled to the original percolation measure ${\bf Q_{\delta;\lambda}}$ on $\lxt$.   Let $\tau > 1$ and define time scale $T_L = e^{L^\tau}$. Define the map $J_L:\bbZ^d\times \bbR \to \bbZ^{d+1}$ by 
      \[J_L(x,t) =
         \left(\lfloor  (x_1 + L)/(2L+1) \rfloor,
         \cdots ,  (x_d + L)/(2L+1),
          \lfloor t/T_L \rfloor\right).  \]
   Let $\wt e_i$ be the standard basis of $\wt \bbZ^{d+1}$, with a $1$ at the $i^{th} $ position and zeros at other positions.  Sites in the bond-site percolation model are occupied with probability  
 \[ q_L = \exp\{\exp\{-\tfrac12 L^\tau\}\}. \]
           Given $\chi'> 0$, nearest neighbor bonds $\{\wt x, \wt y\}$ so that $\wt y = \wt x \pm \wt e_i$ for $i = 1,..,d$ are occupied with probability
           \[ p_L = 1 - \exp\{ - T_L e^{-cL} \},  \]
           for $c>0$ depending only on $d,\|h\|_\infty$ and $\lambda$.
         Nearest neighbor bonds $\{\wt x, \wt y\}$, so that $\wt y = \wt x \pm \wt e_{d+1}$, are occupied with probability one. 
 
   Recall, for $X \in \bbZ^d \times \bbR$ we write $C(X)$ for the connected cluster containing $X$. Similarly, we write $\wt C(0)$ for the connected cluster in $\wt Z^{d+1}$ containing the origin. For a cluster $C \subset \bbZ^d\times \bbR$, $|C|$ refers to the Lebesgue measure of the intervals contained in the cluster. For  $\wt C \subset \bbZ^{d+1}$, $|\wt C|$ is simply the counting measure.

   \begin{proposition}\label{cpl47}
        Let $(\bfTh, \bfT)$ be an environment process.
     Suppose there is  some $\chi > 0$, so that
     \be\label{ps47} \limsup_{ \epsilon \to 0 }   \frac{ \log|\log \psi_h(\epsilon)| }{\log K_2(\epsilon) }     >  \chi .     \ee
     Let $\tau>0 $ be chosen so that $\tau < \chi $. Let $\lambda> 0 $ and $K < \infty$, there is a finite $L > K$ so that the following holds for all initial conditions $\theta \in \bfTh$.
  \begin{itemize}
    \item[1.] There is $\ol x\in \bbZ^d$ so that $|\ol x | \leq L$ and  
    \[ \bfQ_{h,\theta;\lambda}( |C(\ol x,0)| > N T_L ) \geq \wt P_{p_L,q_L}(|\wt C(0)| > N). \]  
    
    \item[2.]
     For any  $(x,t),(y,s) \in \bbZ^d \times \bbR$, there are $\ol x,\ol y \in \bbZ^d$ so that $\|\ol x - x\|, \|\ol y - y\|\leq L$ and 
     \be \label{commlow2} \bfQ_{h,\theta;\lambda}((\ol x,t) \lra (\ol y,s) ) \geq  \wt P_{p_L,q_L}(J_L(x,t) \lra J_L(y,s) )   \ee
     \end{itemize}
   \end{proposition}
  In the bond-site percolation model, when $J_L(x,t) = J_L(y,s)$ we write $\{J_L(x,t) \lra J_L(y,s) \} $ only if the site $J_L(x,t) $ is occupied.

      \begin{proof}
        By (\ref{ps47}), there is a sequence of $\epsilon \to 0$, so that there is some $\theta_\epsilon \in h^{-1}(0)$ so that for every $\theta$ so that $r(\theta,\theta_\epsilon) < \epsilon$,
        \[  -\log h(\theta) \geq  -\log \psi_h(\epsilon) \geq K_2^{\chi}(\epsilon). \]
        By definition of $K_2$, for any $\theta\in \bfTh$ there is some $x $ so that $|x| \leq K_2(\epsilon)$, we have $r(\bfT^x \theta,\theta_\epsilon)$ therefore
        \[  h(\bfT^x\theta) \leq e^{ - K_2^{\chi}(\epsilon)} . \] 
        Choose $\epsilon$ small enough that $ K_2(\epsilon) \geq K$, let $L = K_2(\epsilon) $

          Formally, for each $\wt x\in \wt \bbZ^{d+1}$  associate to it the box in $\bbZ^d\times \bbR$ 	
          \[B(\wt x;L) = \Lambda_L((2L+1)\wt x_{1,..,d}) 
               \times [T_L \wt x_{d+1},T_L(\wt x_{d+1} + 1)]\]
           where $\wt x_{1,..,d} = (\wt x_1,..,\wt x_d) \in \bbZ^d $.
          Let 
          \[u_{\wt x} =
           \arg\min
            \{\delta(x): x\in \Lambda_L((2L+1)\wt x_{1,..,d}) \}.\]  
          Now define the interval
          \[ I_{\wt x} = u_{\wt x}\times[T_L\wt x_{d+1},T_L(\wt x_{d+1}+1)].\] 
          and we define the event of occupation at site $\wt x$
          \[  W_{\wt x} = \{  \tn{No deaths occur on the interval }I_{\wt x} \} \]
          For $\wt x,\wt y \in \tilde \bbZ^{d+1}$, the events $W_{\wt x}$ and $W_{\wt y}$ are independent if $\wt x \neq \wt y$.   For nearest neighbors $\wt x + \wt e_{d+1} =  \wt y $ we consider the bond  between $\tilde{x}$ and $\tilde{y}$  occupied with probability 1. For nearest neighbors $\wt x + \wt e_{i} =  \wt y $, for $i = 1,..,d$ we consider the bond  between $\wt{x}$ and $\wt{y}$  occupied in the event  
          \[ E(\wt x,\wt y;L) =  \{  I_{\wt x}  \lra  I_{\wt y} : \tn{ within } B(\wt x;L) \cup B(\wt y;L)\}. \]

           Let $(\wt x^{(1)},\wt y^{(1)}), ..,(\wt x^{(n)},\wt y^{(n)})$ be any collection of nearest neighbors in $\wt \bbZ^{d+1}$. By the FKG inequality (\ref{fkginfty})
   \[   \wt P ( (\wt x^{(j)},\wt y^{(j)}) \tn{ are occupied for } j=1,...,n  ) \geq \prod_{j =1,..,n} \wt P (  (\wt x^{(j)},\wt y^{(j)}) \tn{ is occupied}  ).\] 

          We now  estimate the occupation probabilities.
          Let $\wt{x},\wt{y}\in\tilde{\zee}^{d+1}$ 
          so that $\wt{x} + e_i = \wt{y}$ for some $i = 1,..d$. Split $B(\wt x;L) \cup B(\wt y,L)$ into $T$ similar \textquoteleft slices\textquoteright, ie,
          for $j \in \{1,\ldots,T\}$ let 
         \[ D_j(\wt{x},\wt{y};L) 
                 = B(\wt x;L) \cup B(\wt y,L)  \bigcap  \left(\zd\times (\wt{x}_{d+1}T + i-1, \wt{x}_{d+1}T + i)\right) \]
          then consider  $ E_i(\tilde{x},\tilde{y};L)$ the event that $u_{\wt x} $ communicates to $u_{\wt y} $ within $ D_j(\wt{x},\wt{y};L)$.
         First  observe that, for any points $\ol{x} \in \Lambda_L((2L+1)\wt x_{1,..,d})$ and $\ol{y} \in \Lambda_L((2L+1)\wt y_{1,..,d})$, there is a path connecting $\ol x$ to $\ol y$ within $\Lambda_L((2L+1)\wt x_{1,..,d}) \cup \Lambda_L((2L+1)\wt y_{1,..,d})$ of length less than $2d(2L+1) < 5dL$. 
          Consider the event $\wh E_i(\wt x,\wt y;L)$ that there are no deaths in $D_j(\wt{x},\wt{y};L) $ on each point in the path and and there is a bond in $D_j(\wt{x},\wt{y};L) $ for every step in the path.
           It is easy to see that $\wh E_i(\wt x,\wt y;L) \subset  E_i(\wt x,\wt y;L) $ so that
         \[\bfQ_{h,\theta;\lambda}(E_i(\wt{x},\wt{y};L))
              \geq 
            \left(1- e^{-\lambda}\right)^{5dL}\left(e^{-\|h\|_\infty}\right)^{5dL}
            \geq\exp\{ -cL \},\]
          for some $c<\infty$ depending only on $d, \|h\|_\infty$ and $\lambda$.
          Observe, the bond between $\tilde x$ and $\tilde y$ is occupied if $D_i(\tilde x,\tilde y)$ holds for some $i$ 
          therefore 
          \[   E^c(\wt x,\wt y;L) \subset \cap_{i=1}^T E_i^c(\wt x,\wt y;L) .\] 
           The events $E_i(\wt x,\wt y;L)$ are independent by construction. Therefore, using the FKG inequality (\ref{fkginfty})  we have   
          \[  \bfQ_{h,\theta;\lambda}
           (E(\wt x,\wt y;L))
            \geq  
            \left( 1 - \left(1 - e^{-cL}\right)^{T_L}\right)
           \geq 1 -  \exp\{- T_L e^{ - cL}\}. \] 
          On the other hand,   by assumption $\delta_{u_{\tilde x}} < \exp\{ - L^{\chi} \}$ so we have
          \[  \bfQ_{h,\theta;\lambda} (W_{\wt x})
             =  \exp \{-T  \delta(u_{\wt x}) \} 
             > \exp\left\{- \exp\{ L^\tau  -  L^{\chi}\} \right\} > \exp\{\exp\{-\tfrac12 L^\tau\}\}.  \] 
  This completes the first claim of the Proposition. Indeed, we simply consider the cluster  $C(\ol x,0)$ for $\ol x = u_{\wt 0}$ and compare it to the cluster $\wt C(0)$.
  We now show the final claim. Let $F_L:\bbZ^d\mapsto \bbZ^d $ be defined by letting $ \wh x = \arg\min \{\| (2L+1)y - x\|: y \in \bbZ^d \} $ and
   \[ F_L(x) = \arg\min \{ \delta(y): y\in \Lambda_L ((2L+1) \wh x )  \}, \]  
 so that $F_L$ maps to the site minimizing $\delta$ in the block `standard grid'.  
  Setting $\ol x = F_L(x)$ and $\ol y =  F_L(y)$ obtains (\ref{commlow2}). 
   \end{proof}

\begin{proof}[proof of Theorem \ref{hirec}.]
          By the first part of Proposition \ref{cpl47} there exists an infinite component if the bond-site percolation model has an infinite component. As $\chi > 1$, we can choose $\chi > \tau > 1$. For any $\lambda > 0$, we can choose $K$ sufficiently large so that $q_L$ and $p_L$ are arbitrarily close to 1. From a standard Peierl's argument, using $p_L$ and $q_L$ close enough to 1, there exists an infinite component with probability 1. Thus $h \in F_{strong}$ which completes the proof.
      \end{proof}

\subsection{Rotations on $\bbT$}\label{rotations}
 \begin{proof}[proof of Theorem \ref{rotationtheorem}] 
  We begin with part (1.).

     Let us review facts from the theory of continued fractions \cite{k64c}.
      Let $p_n/q_n$ be the sequence of approximants defined in (\ref{cfracrep})
      the sequence of denominators is defined as $q_{-2} = 0,q_{-1} = 1 $ and $q_n = a_n q_{n-1} + q_{n-2}$.
        Moreover, for any interval $I \subset \bbT$ so that $|I| > 1/q_n$, and any $\theta\in\bbT$, 
          there is some $k$ so that $1 \leq k \leq q_n + q_{n-1}$ and $\theta + k\omega \in I$.
    Thus, by definition, $ K_2( \tfrac1{2q_n}) \leq q_n + q_{n-1} < 2 q_n$.
         
         From the assumption on $h$, for sufficiently large $n$ and any $\theta$ so that $r (\theta, \theta_0 ) < \tfrac1{2q_n} $
         \[  \log|\log h(\theta)| > a |\log \frac{1}{2q_n}|  > a \log K_2(\frac{1}{2q_n} ) \]
          Thus $h$ satisfies (\ref{psilimit2}), this completes the proof of statement (1.).

  Now let us prove statement $(2.)$.   
   Again we recall a standard result in continued fraction theory. Let $r$ be the metric on the torus. For any $ n > 0$ and all  $ 0 < q \leq q_{n}$, 
   \[ |r( q 2\pi  \omega, 0)| > |r( q_n 2\pi  \omega,0)| > \frac{1}{2 q_{n+1} }. \]
   So that $K_1(\tfrac{1}{2q_{n+1}}) \geq q_n$, by the assumption on $h$, for large  $n > 0$ and $\theta$ so that $r(\theta, h^{-1}(0)) > \frac1{2q_{n+1}} $,
   \[ \log |\log h(\theta)| < a |\log \frac{1}{2q_{n+1}}| \leq a\log 2C_\omega + a(1 + \gamma ) \log q_n \]
   Thus, for large enough $n$,
   \[
   \frac{\log |\log h(\theta)|}{ \log K(\frac{1}{2 q_{n+1}})} <  a(1 + \gamma ). 
   \]
    Thus $h$ satisfies  (\ref{psilimit}), so for $  \frac{1  + \alpha (1 + \gamma)} {2} < \nu < 1$ and $m  > 0$, there is small enough $\lambda > 0$ so that the pair  $(h,\lambda)$ is $\bfC_{m,\nu}$-localized.
      
 \end{proof}

\bibliographystyle{plain}	
\bibliography{qimrefs}

\begin{thebibliography}{10}

\bibitem{AizKleinNewman}
Michael Aizenman, Abel Klein, and Charles Newman.
\newblock Percolation methods for disordered quantum ising models.
\newblock In Roman Kotecky, editor, {\em Phase Transitions: Mathematics,
  Physics, Biology}. World Scientific, 1993.

\bibitem{bez91}
Carol Bezuidenhout and Geoffrey Grimmett.
\newblock Exponential decay for subcritical contact and percolation processes.
\newblock {\em The Annals of Probability}, 19(3):984--1009, 1991.

\bibitem{billingsley}
Patrick Billingsley.
\newblock {\em Convergence of probability measures}.
\newblock Wiley, 1999.

\bibitem{bjornberg}
Jakob Bj\"ornberg.
\newblock {\em Graphical representations of {Ising and Potts} models}.
\newblock PhD thesis, KTH Matematik, 2009.

\bibitem{BG09}
Jakob Bj\"ornberg and Geoffrey Grimmett.
\newblock The phase transition of the quantum {I}sing model is sharp.
\newblock {\em Journal of Statistical Physics}, 136(2):231--273, 2009.

\bibitem{cam91}
Massimo Campanino and Abel Klein.
\newblock Decay of two-point functions for $(d+1)$-dimensional percolation,
  {Ising and Potts} models with $d$-dimensional disorder.
\newblock {\em Communications in Mathematical Physics}, 135(3):438 -- 497,
  1991.

\bibitem{ckp91}
Massimo Campanino, Abel Klein, and J.~Fernando Perez.
\newblock Localization in the ground state of the ising model with a random
  transverse field.
\newblock {\em Communications in Mathematical Physics}, 135(3):499--515, 1991.

\bibitem{CS2015}
Jacob Chapman and G{\"u}nter Stolz.
\newblock Localization for random block operators related to the {XY} spin
  chain.
\newblock In {\em Annales Henri Poincar{\'e}}, volume~16, pages 405--435.
  Springer, 2015.

\bibitem{cfs1982}
I.P. Cornfeld, S.V. Fomin, and Ya.G. Sinai.
\newblock {\em Ergodic theory}.
\newblock Springer-Verlag, 1982.

\bibitem{CI10}
Nicholas Crawford and Dmitry Ioffe.
\newblock Random current representation for transverse field {Ising} model.
\newblock {\em Communications in Mathematical Physics}, 296(2):447--474, 2010.

\bibitem{ethier}
Stewart Ethier and Thomas Kurtz.
\newblock {\em Markov processes: characterization and convergence}.
\newblock Wiley, 1986.

\bibitem{Grimmett10}
Geoffrey Grimmett.
\newblock {\em Probability on graphs: random processes on graphs and lattices},
  volume~1.
\newblock Cambridge University Press, 2010.

\bibitem{gos08}
Geoffrey~R Grimmett, Tobias~J Osborne, and Petra~F Scudo.
\newblock Entanglement in the quantum ising model.
\newblock {\em Journal of Statistical Physics}, 131(2):305--339, 2008.

\bibitem{G2010}
G.R. Grimmett.
\newblock {\em The Random-Cluster Model}.
\newblock Grundlehren der mathematischen Wissenschaften. Springer Berlin
  Heidelberg, 2010.

\bibitem{jito}
Svetlana Jitomirskaya and Abel Klein.
\newblock Ising model in a quasiperiodic transverse field, percolation, and
  contact processes in quasiperiodic environments.
\newblock {\em Journal of Statistical Physics}, 73(1):319 -- 344, 1993.

\bibitem{k64c}
A.Ya. Khinchin and H.~Eagle.
\newblock {\em Continued Fractions}.
\newblock Dover books on mathematics. Dover Publications, 1964.

\bibitem{klein94}
Abel Klein.
\newblock Extinction of contact and percolation processes in a random
  environment.
\newblock {\em The Annals of Probability}, 22(2):1227--1251, 1994.

\end{thebibliography}

%\end{flushleft}
\end{document}